\newif\ifejcolor
\newcommand\Z{\mathbb{Z}}
\newcommand\N{\mathcal{N}}
\title{Infinite Communication Complexity}
\author{Pierre Guillon\\
I2M, UMR 7373 - Campus de Luminy\\
Avenue de Luminy - Case 907\\
13288 MARSEILLE Cedex 9, FRANCE\\
\texttt{pierre.guillon@univ-amu.fr}
\and
Emmanuel Jeandel\\
LORIA, UMR 7503 - Campus Scientifique, BP 239\\
54\,506 VANDOEUVRE-L\`ES-NANCY, FRANCE\\
\texttt{emmanuel.jeandel@loria.fr}}
\theoremstyle{plain}
\newtheorem{theorem}{Theorem}
\newtheorem{definition}{Definition}[section]
\newtheorem{proposition}[definition]{Proposition}
\newtheorem{cor}[definition]{Corollary}
\newtheorem{conj}{Conjecture}
\begin{document}

\maketitle
\begin{abstract}

Suppose that Alice and Bob are given each an infinite string, and they
want to decide whether their two strings are in a given relation. How
much communication do they need? How can communication be even defined
and measured for infinite strings? In this article, we propose a formalism
for a notion of infinite communication complexity, prove that it satisfies
some natural properties and coincides, for relevant applications, with
the classical notion of amortized communication complexity. Moreover,
an application is given for tackling some conjecture about
tilings and multidimensional sofic shifts.

\textbf{Keywords}: Symbolic Dynamics, Communication Complexity, Tilings.
\end{abstract}

\section{Introduction}
In this article, we are interested in introducing a generalization of
Communication Complexity \cite{Yao,Comm} to infinite inputs.
The motivation comes from the theory of tilings and regular languages
of infinite pictures, specifically to express that two neighbour cells
can exchange only a finite information.

In this setting, Alice and Bob have a binfinite
word $x$ and $y$. We think of this input as an infinite array of
cells, each containing a symbol, the cell $x_i$ having a channel
of communication with the cell $y_i$. We are looking at decentralized
protocols, that is each decision must be made locally.
These precautions are mandatory to avoid unrealistic protocols. As an
example, a protocol that sends 2 bits in each channel can be simulated
by a protocol that sends only one bit in each channel: instead of
sending, in each channel $i$, two bits $u_i$ and $v_i$, just send the
bit $u_i$ in channel $2i$ and the bit $v_i$ in channel $2i+1$.
Our definition will implicitely forbid such a protocol, and provide a
meaningful definition of ICC.
This situation is similar to what happens for an other model of
communication complexity with infinite inputs called  algebraic
communication complexity \cite{DisComm,DisComm2,DisComm3}, where coding two real numbers
into a single real number should also be forbidden.

We will focus in this article only on nondeterministic communication
complexity. It is the most natural from the point of view of
applications, and is also the easiest to define.
We postpone definitions of deterministic (and probabilistic)
communication complexity to further articles.

Once the definition is given, we will see that many propositions from
finite communication complexity can be translated, with different
proofs, to the infinite setting. We will prove in particular that this
new notion of complexity coincides for relations with what is known as
amortized complexity \cite{AmorComm}. Amortized complexity asks for the best
protocol to decide whether $n$ pairs $(x_i, y_i)$ belong to some
relation $R$, for large $n$, while infinite complexity asks for the best
protocol when the number of pairs is infinite. It is natural that
these quantities should be equal.

While the definition is interesting in its own right, the main
motivation comes from the theory of tilings, where the concept is
quite natural when dealing with regular languages of
infinite pictures. In the last section we will see how this new tool
gives us new insights into this theory.

\section{Preliminaries}

\subsection{Communication Complexity}

We introduce here the formalism of (nondeterministic) Communication
Complexity, with a few innocuous adjustments that will make the
transition to infinite communication complexity easier.
The first chapters of~\cite{Comm} are recommended readings.

Let $S \subseteq X \times Y$ be a binary relation (in communication
complexity, we usually think of $S$ as a boolean function).
Alice is given $x \in X$ and Bob is given $y \in Y$ and they want to
know whether $(x,y) \in S$. The communication complexity of $S$ is the
number of bits that Alice and Bob need to exchange to decide whether
$(x,y) \in S$.

In nondeterministic communication complexity, Alice and Bob are
allowed nondeterministic choices, and must succeed only if
 $(x,y) \in S$.
As in the definition of NP, an alternative definition can be given in
terms of proofs: Alice and Bob are both given nondeterministically a
``proof'' $z$ that $(x,y) \in S$ and each of them uses $z$ to verify
that indeed $(x,y) \in S$.
This is defined formally as follows:

\begin{definition}
A \emph{nondeterministic protocol} for a relation $S \subseteq X \times Y$
is a tuple $(Z,S_X, S_Y)$ where $S_X \subseteq X \times Z$, $S_Y
\subseteq Y \times Z$ and 
\[(x,y) \in S \iff \exists z \in Z, (x,z) \in S_X, (y,z) \in S_Y~.\]
The \emph{size} of the protocol is $\log |Z|$.
The \emph{nondeterministic communication complexity} of $S$, denoted
$\N(S)$ is the minimal size of a protocol for $S$.
\end{definition}
We now give a few examples that will be useful later on.

\begin{itemize}
\item Let $X = Y = \{0,1\}^n$ and $EQ = \{ (x,y) | x = y \}$.
  Then $\N(EQ) = O(n)$. Intuitively, Alice sends all of her
  bits to Bob. (Formally, take $Z = \{0,1\}^n$ and $S_X = S_Y = EQ$).
  It can be proven that this bound is tight: $\N(EQ) = n$.
\item Let $X = Y = \{0,1\}^n$ and $NEQ = \{ (x,y) | x \not= y \}$.
  Then we have $\N(NEQ) = O(\log n)$. Intuitively, Alice chooses $i$
  and sends both $i$ and $x_i$ to Bob, who verifies that $y_i \not= x_i$. 
  This takes $\log n + 1$ bits.  
  (Formally, take $Z = \llbracket 1,n\rrbracket \times \{0,1\}$,
  $S_X = \{ (x,(i,a)) | x_i = a \}$ and
  $S_Y = \{ (y,(i,a)) | y_i \not= a \}$).
  It can be proven that this bound is almost tight: $\mathcal{N}(NEQ) \geq \log n$.
  
\end{itemize}

\subsection{Symbolic Dynamics}

We will now generalize this definition to take into account infinite
inputs, that is inputs in $A^\mathbb{Z}$ for some finite set $A$.
Choosing biinfinite words rather than infinite words is of no
consequence but simplifies the exposition.

The idea is that Alice and Bob both have an infinite word as input,
which can be thought of as an infinite collection of cells.
There is a channel of communication between the cell $i$ of Alice and
the cell $i$ of Bob, and Alice and Bob should use these channels to
communicate.

Now we have to be careful with the exact definition of a protocol.
Consider the case of the problem $EQ$ where $X = Y = A^\mathbb{Z}$ for some finite alphabet $A$. 
The ``optimal'' protocol for $EQ$ should be for Alice something like
sending $x_i$ through the $i$-th channel $i$, and for Bob to compare
the output of its $i$-th channel with $y_i$. This would use $\log |A|$
bits per channel.

However, other protocols are possible: $A^\mathbb{Z}$ is in bijection
with $\{0,1\}^\mathbb{Z}$ so another protocol would be for Alice to
transform its word $x \in A^\mathbb{Z}$ into a word $f(x) \in
\{0,1\}^\mathbb{Z}$, sending $f(x)_i$ through the $i$-th channel, then
for Bob to apply $f^{-1}$ on the whole word it receives through all
channels, then compare the output with $y$. This would use only $1$
bit per channel.

This protocol is of course not what a good protocol should be and it
will forbidden by the definitions. We will ask for all cells of Alice
to act in the \emph{exact same way}, and for the communication on a
given cell to depend only on finitely many cells of the input.

The best way to formalize all this convincingly is with the vocabulary
and the formalism of \emph{symbolic dynamics}. Indeed, the first
property corresponds to an invariance by translation, and the second
property to a continuity argument, both being central in the study of
symbolic dynamics. We refer the reader to \cite{LindMarcus} for a
good introduction to this domain.

Using the definitions below, we will be able to answer the three
following questions:
\begin{itemize}
	\item What relations $S$ should be considered ?
	\item What should be $Z$, $S_X$, $S_Y$ ? (What is a protocol ?)
	\item How do you measure the size of a infinite set ? (What is the complexity ?)	
\end{itemize}	

The basic infinite sets we will be considering are called
\emph{subshifts}. Formally speaking, a subshift is a topologically closed subset which is invariant by the shift map $\sigma$, defined by $\sigma(x)_i=x_{i+1}$ for $x\in A^\Z$ and $i\in\Z$. This encompasses both desired properties: Every cell
behaves the same (shift-invariance), and operations depend on finitely
many cells (continuity, here in the form of closedness/compactness).
We will use here the following, equivalent, definition: A set $S \subset
A^\mathbb{Z}$ is a subshift (or simply shift) if there exists a set of words $\mathcal{F}$
over $A$ so that $S$ is exactly the set of infinite words that do not
contain any pattern in $\mathcal{F}$ as a factor.

For example, the set of biinfinite words over $A = \{a,b\}$ that contains at most
one symbol $b$ is a subshift, corresponding to $\mathcal{F} = \{ ba^nb, n \in \mathbb{N}\}$.

Intuitively, Alice can semi-decide if an biinfinite word $w$ is in a
subshift $S$: on each cell $i$, the same program is executed that reads
continuously the letters around $i$, and fails if it sees a forbidden pattern.
If there is one, some cell will fail at some time $t$ and every
cell will fail at some time.

This might seem to be too powerful (Alice doesn't even need to be
computable in this definition). There are various classes of subshifts
that might be expressed in terms of restriction of Alice's power.
We will see the class of sofic shifts later on, but we will focus here
on \emph{shifts of finite type}.
A subshift $S$ is of finite type if it can be given by a finite
set $\mathcal{F}$ of words of some size $n$.
This corresponds to the case where in each cell $i$ the same program is
run for a given time $t$ (reading the content of cell
$i$ and adjacents cells) and then the word is accepted if none
of the programs has failed by time $t$.


With this formalism, we can now describe what a relation and a
protocol are.
It remains to define  how to measure the size of the sets.
The good notion for this is \emph{entropy} \cite{LindMarcus}.
Informally, a subshift $S$ has entropy $\log c$ if it has $\Omega(c^n)$
differents factors of length $n$. 

Formally, if we denote by $c_n$ the number of different words of size
$n$ of $S$, then the entropy $H(S)$ can be defined by:
\[ H(S) = \lim_n \frac{\log c_n}{n}~.\]
As an exemple $\{0,1, 2 \dots c\}^\mathbb{Z}$ has entropy $\log c$.
The entropy is a good notion of complexity, as is made clear by the
following remarks. First, if $S \subseteq S'$ then $H(S) \leq H(S')$.
Second, if $S$ maps onto $S'$, then $H(S) \geq H(S')$. More precisely:
\begin{definition}
A \emph{block code} $f$ is a continuous, shift-commuting map
$f: S \rightarrow S'$.
For such a map, $H(f(S)) \leq H(S)$.
If $f$ is one-to-one, then $H(f(S)) = H(S)$.

We say that $S$ factors onto $S'$ (and that $S'$ is a factor of $S$)
if there is an \emph{onto} block code $f: S \rightarrow S'$ (also called a factor map)
If $f$ is also one-to-one, $f$ is called a conjugacy, and we say that
$S$ and $S'$ are conjugated.

In particular, if $S'$ is a factor of $S$ then $H(S') \leq H(S)$.
If $S'$ is conjugated to $S$ then $H(S') = H(S)$.

\end{definition}
These few properties mean that reasoning on the size of finite sets
may be translated easily into statements on entropy of shifts.
A notable difference is that $H(S) < H(S')$ does not imply
that there is a one-to-one map from $S$ to $S'$, or an onto map from
$S'$ to $S$.

\subsection{Definition}

We are now ready to define nondeterministic communication complexity:
\begin{definition}
A \emph{nondeterministic protocol} for a subshift $S \subseteq X \times Y$
is a tuple $(Z,S_X, S_Y)$ where $S_X \subseteq X \times Z$, $S_Y
\subseteq Y \times Z$ are subshifts and for all $(x,y) \in X \times Y$,
\[(x,y) \in S \iff \exists z \in Z, (x,z) \in S_X, (y,z) \in S_Y~.\]
The \emph{size} of the protocol is $H(Z)$.
The \emph{nondeterministic communication complexity} of $S$, denoted
$\N(S)$ is the infimum of the size of a protocol for $S$.
\end{definition}

As explained above, the definition mirrors the one in the
finite case, replacing ``finite set'' by ``subshift'' and ``size'' by
``entropy''.

We will see in the next section that this is the good definition to
adopt, as natural and obvious statements will indeed be true.

In the remaining of the paper, we will assume that if $S \subseteq X
\times Y$ is a subshift for which we want to compute the communication
complexity, then $X = \{ x | \exists y \in Y, (x,y) \in S\}$ (that is
the map $S \rightarrow X$ is onto), and similarly for $Y$.
This is an innocuous hypothesis but necessary for theorems below not
to fail for stupid reasons. Many statements also assume implicitely that $S$ is
nonempty.

With these hypotheses, a protocol $(Z,S_X,S_Y)$ entails a few maps.
Denote by $L$  the set of triples $(x,y,z)$ such that $z$ is a protocol 
for $(x,y)$, that is $(z,x,y)$ satisfies simultaneously $z \in Z, (x,z) \in S_X, (y,z) \in S_Y$.

Many properties below may be deduced from the following diagram:
\begin{equation}
\label{eqn:diagram}
\vcenter{	
 \xymatrix@R=20pt@C=20pt{
 && L \ar[dl]_{\Pi_{X\times Y}} \ar[dr]^{\Pi_Z} \\
 &S     \ar[dl]_{\Pi_X} \ar[dr]^{\Pi_Y}&&
 Z\\
 X&&Y
 }}
\end{equation}
 
By definition of a protocol,
$\Pi_{X \times Y}$ is always onto, and we may suppose without loss of
generality that the three other maps involved in this diagram are also
onto (hence factor maps).
This diagram may be completed by maps from/to $S_X$ and $S_Y$ but they
will not be needed explicitely in the following sections.

\section{Properties}

In this section, we will give a few properties of the infinite
communication complexity.  

First a few obvious properties:
\begin{proposition}
	Let $S \subseteq X \times Y$.
\begin{itemize}	
\item	 $\N(S) \geq 0$ (unless $S$ is empty);
\item	$\N(S) \leq \min(H(X),H(Y))$;
\item If $X'$ and $Y'$ are subshifts, then
  $\N(S \cap (X' \times Y')) \leq \N(S)$.
\end{itemize}	
\end{proposition}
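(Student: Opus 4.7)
All three items follow directly from the definition, and I would treat them in order.

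For the first bound, I would recall that the entropy of a nonempty subshift is nonnegative (since $c_n\ge 1$ gives $\log c_n\ge 0$), and observe that as soon as $S$ is nonempty every valid protocol has nonempty $Z$: picking any $(x,y)\in S$, the definition forces a witness $z\in Z$, so $H(Z)\ge 0$. Thus $\N(S)$ is an infimum of nonnegative values, hence nonnegative.

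For the second bound, the plan is to exhibit the ``Alice sends her whole input'' protocol. I would take $Z=X$, let $S_X$ be the diagonal $\{(x,x):x\in X\}\subseteq X\times X$, and let $S_Y=\{(y,z)\in Y\times X:(z,y)\in S\}$. The only real check is that both are subshifts of the appropriate product: $S_X$ is closed and shift-invariant as a copy of $X$ inside $X\times X$, and $S_Y$ is a coordinate permutation of the subshift $S$. The protocol equivalence then reduces, using $(x,z)\in S_X\iff z=x$, to the tautology $(x,y)\in S\iff(x,y)\in S$, and the size is $H(Z)=H(X)$. The bound $\N(S)\le H(Y)$ follows by exchanging the roles of Alice and Bob.

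For the third item, given any protocol $(Z,S_X,S_Y)$ for $S$, the natural move is to keep $Z$ unchanged and set $S'_X=S_X\cap(X'\times Z)$ and $S'_Y=S_Y\cap(Y'\times Z)$; these are subshifts as intersections of subshifts. A direct rewriting shows that $(x,y)\in S\cap(X'\times Y')$ iff some $z\in Z$ satisfies both new conditions, so $(Z,S'_X,S'_Y)$ is a protocol for $S\cap(X'\times Y')$ of the same size $H(Z)$. Taking the infimum over protocols for $S$ yields the claim.

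I do not expect any serious obstacle: each item reduces to recognizing that a natural candidate (diagonal, permutation, intersection) is again a subshift of the relevant product space, which is immediate from closedness and shift-invariance. The only bookkeeping worth noting is that the paper's standing convention $X=\{x:\exists y,(x,y)\in S\}$ must be re-established for the new relation in item three; this is achieved by further restricting to the true projections of $S\cap(X'\times Y')$ onto each coordinate, which are still subshifts and do not affect $H(Z)$.
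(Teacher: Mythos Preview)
Your proposal is correct and follows essentially the same approach as the paper: the same contrapositive for the first item, the same ``Alice sends her input'' protocol with $Z=X$, $S_X$ the diagonal, and $S_Y$ (a coordinate flip of) $S$ for the second, and the same intersection trick $S'_X=S_X\cap(X'\times Z)$, $S'_Y=S_Y\cap(Y'\times Z)$ for the third. Your write-up is in fact slightly more careful than the paper's (you note the coordinate swap needed so that $S_Y\subseteq Y\times Z$, and you flag the projection convention), but there is no substantive difference.
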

\begin{proof}
\begin{itemize}	
\item	Let $Z$ be a protocol for $S$. If $Z$ is empty (and then $H(Z) =
	-\infty$), $S$ will be empty. Otherwise, $Z$ is nonempty and then
	$H(Z) \geq 0$. Therefore $\N(S) \geq 0$.
\item	
	$\N(S) \leq H(X)$ is clear: Take the protocol where Alice sends her
	input to Bob. Formally take $Z = X$, $S_X = \{(x,x) | x \in X\}$ and $S_Y = S$.
\item For the last item, it is clear from the definition that a
  protocol for $S$  may be transformed into a protocol for $S \cap (X'
  \times Y')$ by changing only $S_X$ and $S_Y$.
\end{itemize}
\end{proof}	

The first obvious example requires no communication:
\begin{proposition}
	(If $X$ and $Y$ are nonempty,) $\N(X \times Y) = 0$.
\end{proposition}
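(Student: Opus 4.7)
The plan is to exhibit an explicit protocol $(Z,S_X,S_Y)$ with $H(Z)=0$, and then combine this with the lower bound from the previous proposition to obtain equality.

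First, I would take $Z$ to be a subshift consisting of a single point, for instance the unique biinfinite word over a singleton alphabet $\{\star\}$. Then the number of factors of length $n$ is exactly $1$, so $H(Z) = \lim_n \frac{\log 1}{n} = 0$.

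Next, I would define $S_X := X \times Z \subseteq X \times Z$ and $S_Y := Y \times Z \subseteq Y \times Z$. These are subshifts as products of subshifts (one can take the union of the forbidden patterns defining the two factors). To verify the protocol condition, let $z^\star$ denote the unique element of $Z$. If $(x,y) \in S = X \times Y$, then $(x,z^\star) \in S_X$ and $(y,z^\star) \in S_Y$, witnessing the forward implication. Conversely, if some $z \in Z$ satisfies $(x,z) \in S_X$ and $(y,z) \in S_Y$, then in particular $x \in X$ and $y \in Y$, so $(x,y) \in S$. Thus $(Z,S_X,S_Y)$ is a valid protocol of size $0$, giving $\N(S) \le 0$.

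Finally, since $X$ and $Y$ are nonempty, $S = X \times Y$ is also nonempty, so the previous proposition yields $\N(S) \ge 0$. Combining both inequalities gives $\N(X \times Y) = 0$. There is no real obstacle here; the only subtlety is to check that a one-point subshift genuinely has entropy $0$ and that the product construction of $S_X, S_Y$ remains a subshift, both of which are immediate from the definitions.
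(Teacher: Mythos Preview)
Your proof is correct and essentially identical to the paper's: both take $Z$ to be a one-point subshift (the paper uses the constant sequence $\cdots 000\cdots$, you use the unique word over $\{\star\}$), set $S_X=X\times Z$ and $S_Y=Y\times Z$, and then invoke the nonemptiness of $S$ for the lower bound.
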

\begin{proof}
Alice sends something to Bob, independently of her input. Formally,
take $Z$ consisting only of the periodic point $ \dots 000 \dots$ ($Z$
is of entropy $0$), $S_X = X \times Z$ and $S_Y = Y \times Z$.
This proves $\N(X \times Y) \leq 0$.
Entropy is negative only when $Z$ is empty, hence $\N(S) \geq 0$.

\end{proof}

The first interesting example is equality: Give Alice and Bob each a word
$x$ and $y$, and decide if $x = y$.

\begin{definition}
	If $T$ is a subshift, then $EQ_T = \{ (t,t) | t \in T \}$.
\end{definition}	

\begin{proposition}
	$\N(EQ_T) = H(EQ_T) = H(T)$.
\end{proposition}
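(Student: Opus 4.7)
The plan is to establish the three (in)equalities $H(EQ_T) = H(T)$, $\N(EQ_T) \leq H(T)$, and $\N(EQ_T) \geq H(EQ_T)$, which together yield the statement. The first and the upper bound are direct transcriptions of the finite case; the content of the proposition lies in the lower bound.

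For $H(EQ_T) = H(T)$, the diagonal map $\Delta : T \to EQ_T$, $t \mapsto (t,t)$, is continuous, shift-commuting, and bijective onto $EQ_T$. It is therefore a conjugacy, so by the entropy properties of block codes recalled in the preliminaries, $H(T) = H(EQ_T)$.

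For the upper bound $\N(EQ_T) \leq H(T)$, I mimic the finite ``Alice sends her input'' protocol: take $Z = T$, $S_X = \{(t,t) : t \in T\} \subseteq X \times Z$ and $S_Y = \{(t,t) : t \in T\} \subseteq Y \times Z$. Then $z$ is a valid witness for $(x,y)$ exactly when $z = x = y$, which is equivalent to $(x,y) \in EQ_T$. Since $Z = T$, this protocol has size $H(T)$.

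For the lower bound $\N(EQ_T) \geq H(EQ_T)$, I consider an arbitrary protocol $(Z,S_X,S_Y)$ together with the subshift $L \subseteq X \times Y \times Z$ and the diagram~\eqref{eqn:diagram}. The key claim is that the projection $\Pi_Z : L \to Z$ is \emph{injective}. Indeed, if $(x,x,z)$ and $(x',x',z)$ both lie in $L$, then $(x,z) \in S_X$ and $(x',z) \in S_Y$ exhibit $z$ as a witness that $(x,x') \in EQ_T$, which forces $x=x'$. Now $\Pi_Z$ is a block code, so by the injectivity property we get $H(L) = H(\Pi_Z(L)) \leq H(Z)$; applying the standing assumption that the projections of the diagram are onto (so $\Pi_Z(L) = Z$), we even have equality $H(L)=H(Z)$. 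On the other hand $\Pi_{X\times Y} : L \to EQ_T$ is onto by definition of $L$, giving $H(L) \geq H(EQ_T)$. Combining yields $H(Z) \geq H(EQ_T)$, and taking the infimum over protocols gives $\N(EQ_T) \geq H(EQ_T)$.

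The main (mild) obstacle is recognising the injectivity of $\Pi_Z$: it exploits the symmetric role played by $S_X$ and $S_Y$ in the definition of $EQ_T$, namely that a witness shared by two diagonal pairs can be recombined into an off-diagonal pair which the protocol must reject. Once this observation is made, the result falls out of the diagram and the standard entropy estimates for block codes.
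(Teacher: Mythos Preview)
Your proof is correct and follows essentially the same approach as the paper: both establish the upper bound by the ``Alice sends her input'' protocol and the lower bound by observing that $\Pi_Z$ is injective (since a common witness $z$ for $(x,x)$ and $(x',x')$ recombines into a witness for $(x,x')\in EQ_T$), whence $H(Z)=H(L)\geq H(EQ_T)$. You are slightly more explicit about the equality $H(EQ_T)=H(T)$ via the diagonal conjugacy, which the paper leaves implicit.
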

\begin{proof}
$\N(EQ_T) \leq H(T)$ is clear: Just take the protocol where Alice sends
its input to Bob. 

Conversely, let $(Z,S_X,S_Y)$ be a protocol for $EQ_T$.
It is clear that in Diagram~\ref{eqn:diagram}, the map $\Pi_Z$ should
be one-to-one: A word $z$ cannot be a protocol for two different pairs
$(x,x)$ and $(y,y)$ as this would imply $(x,y) \in EQ_T$.
As $\Pi_Z$ is one-to-one, this implies $H(Z) = H(L) \geq EQ_T$. 
\end{proof}
We will now give three other proofs of the previous proposition,
introducing other methods to give lower bounds for Communication
Complexity. 

First is the well-known method of fooling sets.
\begin{definition}
	Let $S \subseteq X \times Y$ be a subshift.

	A \emph{fooling set} is a subshift $F \subseteq S$ such that:
 For each $(x,y) \in F$,  there exist at most countably
		  many pairs $(x',y') \in F$ so that $(x,y') \in S$ and $(x',
		  y)\in S$.
\end{definition}	
The usual definition in the finite case replaces ``at most countably many pairs'' by ``no other
pair''.
In the infinite case, we can obtain a stronger statement.

\begin{theorem}
	Let $F$ be a fooling set for $S$. 
	Then $\N(S) \geq H(F)$.
\end{theorem}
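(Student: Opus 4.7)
I would mirror the strategy used for $EQ_T$, lifting the fooling set into the protocol diagram. Given any protocol $(Z,S_X,S_Y)$ for $S$, set
\[
L_F = \{(x,y,z)\in X\times Y\times Z : (x,y)\in F,\ (x,z)\in S_X,\ (y,z)\in S_Y\},
\]
which is a subshift as an intersection of shift-invariant closed conditions. The two factor maps of interest are the projection $\Pi_F\colon L_F\to F$, which is onto because every $(x,y)\in F\subseteq S$ has at least one witness by definition of the protocol, and the projection $\Pi_Z\colon L_F\to Z'$ onto its image $Z'\subseteq Z$. The former immediately gives $H(F)\le H(L_F)$, and the inclusion $Z'\subseteq Z$ gives $H(Z')\le H(Z)$, so the whole argument reduces to showing $H(L_F)\le H(Z')$.

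The fooling-set hypothesis translates directly into a fibre condition on $\Pi_Z$. If $(x,y,z)$ and $(x',y',z)$ both sit in $L_F$, then $z$ is a common witness, so the crossed pairs $(x,y')$ and $(x',y)$ both belong to $S$. Applying the fooling property to $(x,y)\in F$ bounds the number of admissible $(x',y')\in F$ by a countable quantity, hence each fibre of $\Pi_Z$ is at most countable. In the classical situation where the fooling set allows no other compatible pair, these fibres would be singletons, so $\Pi_Z$ would be one-to-one and the already noted property ``injective block codes preserve entropy'' would yield $H(L_F)=H(Z')$, closing the argument.

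The main obstacle is therefore the inequality $H(L_F)\le H(Z')$ for the countable-fibre case, where the one-to-one argument does not apply verbatim. The cleanest route I see is to invoke Bowen's fibre-entropy inequality together with the fact that countable invariant subsets of a subshift carry zero topological entropy, so that the fibre contribution vanishes. Equivalently, one can argue via the variational principle and Abramov--Rokhlin that any shift-invariant measure on $L_F$ has the same measure-theoretic entropy as its push-forward to $Z'$ whenever the fibres are at most countable, hence the topological entropies coincide. Once $H(L_F)\le H(Z')$ is secured, chaining $H(F)\le H(L_F)\le H(Z')\le H(Z)$ and taking the infimum over protocols yields $\N(S)\ge H(F)$, which is exactly the claim.
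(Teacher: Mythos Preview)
Your argument is correct and mirrors the paper's proof: both restrict the protocol relation to $L_F$, observe that the projection to $F$ is onto while the projection to $Z$ has at most countable fibres by the fooling condition, and conclude via the standard fact that countable-to-one factor maps preserve topological entropy. The paper simply cites Newton--Parry for this last fact, whereas you sketch it through Bowen's fibre inequality and the variational principle; note, though, that the fibres $\Pi_Z^{-1}(z)$ are not shift-invariant, so your phrase ``countable invariant subsets'' is slightly off---the relevant input is that Bowen's entropy of a (not necessarily invariant) countable compact fibre vanishes.
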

As $EQ_T$ is a fooling set for $EQ_T$, this gives a proof of the
previous proposition.

\begin{proof}  
Let $(Z,S_X,S_Y)$ be a protocol for $S$.  Let $L_F$ be the restriction
of $L$ to tuples $(z,x,y)$ where $(x,y) \in F$. We now look at the
diagram:
\[
\begin{array}{ccccc}
  & \Pi_{X \times Y} && \Pi_Z & \\
F &\leftarrow &L_F &\rightarrow& Z\\
(x,y)& \mapsfrom &(z,x,y) &\mapsto &z\\
\end{array}
\]

We suppose wlog that the map $\Pi_Z$ in the preceding diagram is onto
(replace $Z$ by $f(Z)$). Now the hypothesis implies that each $z \in Z$ has only
countably many preimages.
This means that the map $\Pi_Z$ is a countable-to-one factor map, which
implies that $H(L) = H(Z)$ \cite{NewtonParry}.
As $\Pi_Z$ is onto, we have $H(Z) \geq H(F)$. The result follows.
\end{proof}
We now relate the communication complexity with the largest set that
can be extracted simultaneously from $X$ and  $Y$.

A \emph{common factor} of $X$ and $Y$ is a factor $F$ from $X$ and
$Y$, by maps $\phi$ and $\psi$ such that the following diagram commutes:
$$
 \xymatrix@R=20pt@C=20pt{
 & S \ar[dl]_{\Pi_X} \ar[dr]^{\Pi_Y} \\
 X      \ar[dr]_{\phi}&&
 Y \ar[dl]^{\psi}\\
 &F
 }
 $$

\begin{theorem}
	Let $S \subseteq X \times Y$.
	If $F$ is a common factor of $X$ and $Y$, then $\N(S) \geq H(F)$.
	
	More precisely, if $Z$ is a protocol for $S$, then $Z$ factors onto $F$.
\end{theorem}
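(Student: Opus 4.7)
The plan is to exhibit an explicit factor map $g \colon Z \to F$; the lower bound on $\N(S)$ then follows because entropy cannot increase under a factor map, so $\N(S) = H(Z) \geq H(g(Z)) = H(F)$.

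My first step is to use the protocol lift $L \subseteq Z \times X \times Y$ as an intermediary. On $L$, I have the continuous shift-commuting map $h = \phi \circ \Pi_X \circ \Pi_{X \times Y}$ valued in $F$, and by the commutativity of the common-factor diagram I could equally have used $\psi \circ \Pi_Y \circ \Pi_{X \times Y}$. I want to descend $h$ through $\Pi_Z$, that is, define $g(z) = h(\ell)$ for any $\ell \in \Pi_Z^{-1}(z)$. The main thing to check — and the key step where the hypotheses are really used — is that $h$ is constant on fibers of $\Pi_Z$. Suppose $(z,x,y)$ and $(z,x',y')$ both lie in $L$. Then by definition of a protocol, $(x,z) \in S_X$ and $(y',z) \in S_Y$ imply $(x,y') \in S$, and the common-factor property then gives $\phi(x) = \psi(y') = \phi(x')$, so $h$ is indeed constant on the fiber. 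Thus $g \colon Z \to F$ is well-defined with $g \circ \Pi_Z = h$.

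Next I have to verify that $g$ is a block code. Continuity is automatic: $L$ is compact (closed inside a product of subshifts) and $Z$ is Hausdorff, so the continuous surjection $\Pi_Z$ is a closed map, hence a topological quotient; since $h$ is continuous, $g$ is continuous. Shift-commutation is immediate from $g \circ \Pi_Z = h$ together with the shift-commutation of $\Pi_Z$ and $h$ and the surjectivity of $\Pi_Z$. For surjectivity of $g$: pick $f \in F$, choose $x \in X$ with $\phi(x) = f$ (using that $\phi$ is a factor map), then (by our standing assumption that the projection $S \to X$ is onto) pick $y \in Y$ with $(x,y) \in S$, then lift to $(z,x,y) \in L$; by construction $g(z) = \phi(x) = f$.

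The main obstacle — really the only one with any substance — is the well-definedness argument, because it is the point where the two halves of the protocol (the $S_X$ part coming from one pair and the $S_Y$ part coming from another) have to be recombined using the common-factor hypothesis to force agreement of $\phi(x)$ values. The continuity/surjectivity/shift-commutation checks are then routine consequences of compactness and the fact that all maps in Diagram~\ref{eqn:diagram} are factor maps.
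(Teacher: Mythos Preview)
Your proof is correct and follows essentially the same route as the paper's: lift the common-factor map $\phi\Pi_X=\psi\Pi_Y$ to $L$, show it is constant on $\Pi_Z$-fibers by recombining $(x,z)\in S_X$ with $(y',z)\in S_Y$ to get $(x,y')\in S$ and hence $\phi(x)=\psi(y')=\phi(x')$, then descend to a factor map $Z\to F$. One trivial slip: you wrote ``$\N(S)=H(Z)$'', but what you mean is that $H(Z)\geq H(F)$ holds for \emph{every} protocol $Z$, whence the infimum $\N(S)\geq H(F)$.
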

As $T$ is a common factor of $EQ_T$, this gives again a new proof of
the proposition.
\begin{proof}
	Let $\theta=\phi\Pi_X = \psi\Pi_Y : S \rightarrow F$ denote the common map and $(Z, S_X,
	S_Y)$ be a protocol for $S$.
	
	Recall the following diagram, where we assume wlog that $\Pi_Z$ is onto.
	$$
		\begin{array}{ccccc}
			& \Pi_{X\times Y} && \Pi_Z & \\
			S &\leftarrow &L &\rightarrow& Z\\
			(x,y)& \mapsfrom &(z,x,y) &\mapsto &z\\
		\end{array}
	$$
	$\theta$ can be lifted to a map $\tilde\theta$ from $L$ to $F$
	by $\tilde\theta = \theta \Pi_{X\times Y}$.
			Now it is easy to see that $\tilde\theta$ depends only on
	$z$.
	Indeed, suppose that $(x,y,z) \in L$ and $(x',y',z)\in L$. By
	definition of a protocol, we also have $(x,y',z) \in L$.
	Now $\tilde\theta(x,y,z) = \phi(x) = \psi(y)$,
	$\tilde\theta(x,y',z) = \phi(x) = \psi(y')$, and
	$\tilde\theta(x',y',z) = \phi(x') = \psi(y')$ 
	from which it follows $\tilde\theta(x,y,z) = \tilde\theta(x',y',z)$.
	This means that $h = \tilde\theta \Pi_Z^{-1}$ is actually a function,
	which is obviously shift-invariant, onto, and standard topological
	arguments show it is continuous.	
	Hence we have a factor map from $Z$ to $W$, and $H(Z) \geq H(W)$.
\end{proof}	

An additional lower bound can be made easily, using the notion of \emph{conditional
  entropy}. We adapt slightly the definitions from \cite{Bowen} to make them work better
in our context.

Let $S \subseteq X \times Y$ be a subshift.
For $x \in X$, let $S^{-1}(x)$ be the set of words $y$ such that $(x,y)
\in S$ and $c_n(x)$ be the number of different words of size $n$ that
can appear in position $[0,n-1]$ in a word in $S^{-1}(x)$.
\clearpage
Then  we can define:
\[H_S(Y|x) = \lim\sup_n \frac{\log c_n(x)}{n}~;\]
\[H_S(Y|X) = \sup_{x \in X} H_S(Y|x)~.\]   

Hence $H_S(Y|X)$ measures somehow how many different words $y$ may
correspond to a given word $x\in X$.
Let us give an example. Let ${LEQ = \{ (x,y) \in \{0,1\}^\mathbb{Z} 	\times \{0,1\}^\mathbb{Z} | \forall i, x_i \leq y_i\}}$.
Then $H_{LEQ} (Y|x) = \log 2$ if $x = \cdots 000 \cdots$, 
$H_{LEQ} (Y|x) = 0$ if $x = \cdots 111 \cdots $, and 
$H_{LEQ} (Y|x) = \log 2 / 2$ if $\forall i, x_i = i \mod 2$.
It is easy to see that $H_{LEQ} (Y|X) = \log 2$.

\begin{theorem}
	Let $S \subseteq X \times Y$.	
	Then $\N(S) \geq H(Y) - H_S(Y|X)$.
\end{theorem}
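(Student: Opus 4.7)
My plan is to prove the equivalent inequality $H(Y) \leq H(Z) + H_S(Y|X)$ for an arbitrary protocol $(Z, S_X, S_Y)$ of $S$ and then take the infimum. The strategy is to combine a Bowen-type fiber entropy inequality applied to the projection of $S_Y$ onto $Z$ with a tautological property of protocols that relates the $z$-slices of $S_Y$ to sets of the form $S^{-1}(x)$.

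I would start by performing a harmless cleanup: replace $Z$ by the subshift $Z'$ consisting of those $z$ that occur as the $Z$-coordinate of both some element of $S_X$ and some element of $S_Y$, and restrict $S_X, S_Y$ accordingly. This does not increase entropy and still defines a protocol for $S$; moreover both projections $S_X \to Z$ and $S_Y \to Z$ are then onto, so for every $z \in Z$ the slices $X_z := \{x : (x,z) \in S_X\}$ and $Y_z := \{y : (y,z) \in S_Y\}$ are nonempty. The standing assumption $\Pi_Y(S) = Y$, together with the protocol definition, also makes the projection $S_Y \to Y$ onto, so $H(Y) \leq H(S_Y)$. Applying Bowen's fiber entropy inequality \cite{Bowen} to the factor map $S_Y \to Z$ then yields
\[ H(S_Y) \leq H(Z) + \sup_{z \in Z} h(Y_z), \]
where $h(Y_z) := \limsup_n \frac{1}{n}\log N_n(Y_z)$ and $N_n(Y_z)$ is the number of distinct words of length $n$ appearing at position $[0,n-1]$ in some element of $Y_z$, in direct analogy with $c_n(x)$.

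The crucial observation is a tautological consequence of the protocol definition: if $(x,z) \in S_X$ and $(y,z) \in S_Y$, then $(x,y) \in S$, so $Y_z \subseteq S^{-1}(x)$ for \emph{every} $x \in X_z$. Since $X_z$ is nonempty after the cleanup, choosing any such $x$ gives $N_n(Y_z) \leq c_n(x)$ for all $n$, whence $h(Y_z) \leq H_S(Y|x) \leq H_S(Y|X)$ uniformly in $z$; plugging this into the Bowen inequality yields $H(Y) \leq H(Z) + H_S(Y|X)$, as wanted. The step requiring the most care is the fiber entropy inequality itself: the fibers of $S_Y \to Z$ are in general not shift-invariant subsets, so one must invoke Bowen's notion of topological entropy of an arbitrary subset, which on subshifts coincides with the pattern-growth rate above; with this choice the inequality boils down to a direct pattern-counting argument.
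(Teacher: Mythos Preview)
Your proof is correct and follows essentially the same route as the paper's: apply Bowen's conditional-entropy inequality to the projection $S_Y\to Z$, then bound each fiber $Y_z$ by $S^{-1}(x)$ for some $x$ with $(x,z)\in S_X$, and use $H(Y)\le H(S_Y)$. Your explicit cleanup making the projections onto, and your remark that the fibers $Y_z$ are not shift-invariant (so Bowen's more general notion is needed), are details the paper leaves implicit, but the underlying argument is the same.
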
	
For $S = EQ_T$, $H_S(Y|X) = 0$, which gives again a new proof of the
proposition.
\begin{proof}
	First let us detail the proof in the finite case.
	In this case $s(Y|X)$ denotes the maximum number of different $y$
	that can be associated to a given $x$.
	
	Let $(Z,S_X, S_Y)$ be a
	protocol. Now we can enumerate $Y$ like this: first choose some
	$z$ arbitrarily. For this $z$, choose a unique $x$ so that $(x,z)
	\in S_X$. Then enumerate all words $y$ such that $(x,y) \in S$.
    Now the number of $y$ that are enumerated for a given $z$ is less
	than $s(Y|X)$, hence $|Y| \leq |Z| |s(Y|X)|$
	
	Now, for the infinite case.
	Let $(Z,S_X, S_Y)$ be a protocol.
	By properties of the conditional entropy \cite{Bowen},
	$H(S_Y) \leq  H_{S_Y}(Y|Z) + H(Z)$.
    Recall that $H_{S_Y}(Y|Z) = \sup_{z \in Z} H_{S_Y} (Y|z)$ and let
	$z \in Z$.
	There exists $x$ such that $(x,z) \in S_X$.
	Now if $y$ is such that $(y,z) \in S_Y$	then $(x,y) \in S$.    
	This implies that $H_{S_Y}(Y|z) \leq H_S(Y|x)$.
	Hence $H_{S_Y}(Y|z) \leq H_S(Y|X)$, and finally $H_{S_Y}(Y|Z) \leq H_S(Y|X)$.
	
	We obtain $H(S_Y) \leq H_S(Y|X) + H(Z)$, hence $H(Z) \geq H(S_Y) -	H_S(Y|X) \geq H(Y) - H_S(Y|X)$.
\end{proof}

We now go back again specifically to the equality example.
If we look at the propositions above, we can conclude that 
if $(Z,S_X, S_Y)$ is a protocol for $EQ_T$, then 
there exist $Z' \subseteq Z$ and a factor map from $Z'$ to $T$.

We look now at $T = \{3,4\}^\mathbb{Z} \cup \{5,6\}^\mathbb{Z}$.
$T$ is of entropy $\log 2$. However, there is no factor from $Z =
\{0,1\}^\mathbb{Z}$ to $T$.
That is,  there are no protocols $(Z,S_X,S_Y)$ for $EQ_T$ where $Z =
\{0,1\}^\mathbb{Z}$. This means that having a protocol with $H(Z) =
\log 2$ is not the same as having a protocol with $Z =
\{0,1\}^\mathbb{Z}$.
This is however an artifact of protocols of exact communication
complexity and this does not happen otherwise, as illustrated as follows.

We now introduce a class of shifts, parametrized by $\beta$, 
called the $\beta$-shifts \cite{Parry,Sigmund}, 
which have the following properties: The $\beta$-shift has entropy
$\log \beta$, and for $\beta \in \mathbb{N} \setminus \{0,1\}$, the
$\beta$-shift coincides with the full shift $\{0,1\dots\beta\}^\mathbb{Z}$.
The exact definition is not important, but let just note that the
$\beta$-shift corresponds somehow to numeration in the (possibly
nonintegral) base $\beta$.

Then we can prove
\begin{theorem}
	Let $S \subseteq X \times Y$ be a subshift.
	
	For any $\beta > \N(S)$, there is a
	protocol $(Z,S_X, S_Y)$ where
	$Z$ is the $\beta$-shift.
\end{theorem}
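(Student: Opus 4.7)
Since $\N(S) < \log \beta$, by definition of the infimum there is a protocol $(Z', S_X', S_Y')$ with $H(Z') < \log\beta$. The plan is to relate $Z'$ to the $\beta$-shift $Z$ by a suitable block code and then transfer the protocol across it.

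One clean transfer is via a factor map $\pi : Z \to Z'$. Granting such a $\pi$, I would define $S_X := \{(x,z) \in X \times Z : (x, \pi(z)) \in S_X'\}$ and $S_Y$ analogously. These are subshifts, being preimages of the subshifts $S_X', S_Y'$ under the block code $\mathrm{id}\times\pi$. The protocol condition is then routine: if $(x,y) \in S$, take a witness $z' \in Z'$ in the old protocol and any preimage $z \in \pi^{-1}(z')$; conversely, if $z \in Z$ witnesses $(x,y)$ in the new protocol, then $\pi(z)$ witnesses it in the old one. (Symmetrically, one could use an injective block code $f : Z' \hookrightarrow Z$ and push $S_X', S_Y'$ forward through $\mathrm{id}\times f$; the verification is analogous and uses the injectivity of $f$ to force the same $z'$ to witness both halves.)

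The main obstacle is producing such a $\pi$ (or $f$), and this is where the special structure of the $\beta$-shift enters. The $\beta$-shift satisfies the specification property and has entropy $\log\beta$ strictly greater than $H(Z')$, so a ``lower entropy factor'' theorem in the spirit of Boyle (respectively a Krieger-type embedding theorem), extending the classical results for mixing shifts of finite type, provides $\pi$ (respectively $f$) subject to a mild periodic-point compatibility condition. The strict inequality $\log\beta > H(Z')$ gives enough slack to arrange this condition, if necessary by first replacing $Z'$ with a slight modification of entropy still below $\log \beta$ whose periodic points are compatible; the transfer described above is insensitive to such a replacement. The example $T = \{3,4\}^\mathbb{Z}\cup\{5,6\}^\mathbb{Z}$ with $Z = \{0,1\}^\mathbb{Z}$ discussed just above shows that the strictness of the inequality is essential: at equality, periodic-point obstructions can destroy the existence of the block code.
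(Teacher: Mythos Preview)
Your embedding route is exactly the paper's argument: take a protocol $(Z',S_X',S_Y')$ with $H(Z')<\log\beta$, embed $Z'$ into the $\beta$-shift via Krieger's theorem, then push the protocol forward and enlarge $Z$ to the full $\beta$-shift. The paper is simply more explicit than you are on the two technical points you flag. First, to force the periodic-point condition in Krieger's theorem to hold vacuously, it replaces $Z'$ by $Z'\times(\text{Thue--Morse})$, which has the same entropy and no periodic points at all; this is the concrete ``slight modification'' you allude to. Second, Krieger's theorem needs the \emph{target} to be a mixing SFT, and an arbitrary $\beta$-shift need not be one; the paper handles this by choosing $\beta'$ with $H(Z')<\log\beta'<\log\beta$ for which the $\beta'$-shift \emph{is} a mixing SFT (such $\beta'$ are dense), embedding into that, and then using $(\beta'\text{-shift})\subseteq(\beta\text{-shift})$.

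One caution about the factor-map route you present first: Boyle's lower-entropy factor theorem requires the \emph{target} to be a mixing SFT, and your $Z'$ is an arbitrary subshift coming from an arbitrary protocol, so there is no reason to expect a factor map $\pi$ from the $\beta$-shift onto $Z'$. Invoking ``specification for the source'' does not help here; the obstruction lives in the target. So the factor approach, as stated, has a genuine gap, and the embedding approach (your parenthetical, and the paper's choice) is the one that actually goes through.
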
	
The preceding discussion shows that the result is not true for $\beta = \N(S)$.

\begin{proof}
The idea is to use Krieger's embedding theorem \cite{LindMarcus},
which states that any subshift $S$ can be embedded into a subshift $T$
provided that $T$ has bigger entropy, has more periodic points, and
satisfies a technical condition (be a mixing SFT).
The set of $\beta$ for which this technical condition is true is dense
in $[1,+\infty[$ \cite{Parry}.

Let $\beta > \N(S)$.	
By definition of $\N(S)$, there exists a protocol $(Z,S_X, S_Y)$ for
$S$ where $H(Z) < \beta$.

By changing $Z$ to a product of $Z$ and a Thue-Morse shift, we may
assume wlog that $Z$ has no periodic point.

Now we can find $H(Z) <\beta' <  \beta$ so that the $\beta'$-shift is
a (mixing) SFT. As $Z$ has no periodic points, we can use Krieger's
embedding theorem to embed $Z$ into the $\beta'$-shift, hence into
the $\beta$-shift.

Now we have obtained a protocol $(Z,S_X, S_Y)$ where $Z$ is included
in the $\beta$-shift.
Replacing $S_X$ by $S_X \cap (A \times Z)$, and similarly for $S_Y$,
we may replace $Z$ by the whole $\beta$-shift and then obtain the
theorem.
\end{proof}	
\section{Amortized Communication Complexity}
\label{sec:amor}
In this section, we give a link between infinite communication
complexity and asymptotic communication complexity.
Let $R \subseteq X \times Y$ be a relation.
We denote by $R^n \subseteq X^n \times Y^n$ the relation 
$(x,y) \in R^n \iff \forall i<n, (x_i, y_i) \in R$.

\begin{definition}[\cite{AmorComm}]
	The asymptotic (amortized) communication complexity of $R$ is 
	\[ \N^{asymp}(R) = \lim_n \frac{\N(R^n)}{n}~.\]
\end{definition}

With the same notation we denote by 
$R^\mathbb{Z} \subseteq X^\mathbb{Z} \times Y^\mathbb{Z}$ the relation
$(x,y) \in R^\mathbb{Z} \iff \forall i \in \mathbb{Z}, (x_i, y_i) \in R$.

\begin{theorem}\label{t:asymp}
	$\N(R^\mathbb{Z}) = \N^{asymp}(R)$.
\end{theorem}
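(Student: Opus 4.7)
The plan is to prove the two inequalities of $\N(R^\Z) = \N^{asymp}(R)$ separately. Observe first that $\N(R^n)$ is subadditive in $n$ (a protocol for $R^{m+n}$ is obtained by running independent protocols for $R^m$ and $R^n$ in parallel), so by Fekete's lemma $\N^{asymp}(R) = \inf_n \N(R^n)/n$. It thus suffices to establish $\N(R^\Z) \leq \N(R^n)/n$ for every $n$, and $\N(R^\Z) \geq \N^{asymp}(R)$.

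For $\N(R^\Z) \leq \N(R^n)/n$, I would take a finite protocol $(Z_n, S_{X,n}, S_{Y,n})$ of size $\N(R^n)$ and lift it to a block protocol. Let $Z$ be the subshift over alphabet $Z_n \times \{0, \ldots, n-1\}$ whose second coordinate cycles $\ldots 0, 1, \ldots, n-1, 0, 1, \ldots$ (imposing a length-$n$ block structure up to a free phase) and whose first coordinate is constant on each such block. A length-$N$ window contains at most $n \cdot |Z_n|^{\lceil N/n \rceil}$ valid configurations, giving $H(Z) = (1/n) \log |Z_n| = \N(R^n)/n$. The SFTs $S_X, S_Y$ use length-$(n+1)$ forbidden patterns to run the finite verification on each block; completeness and soundness are immediate since the blocks partition $\Z$.

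For the converse, fix any protocol $(Z, S_X, S_Y)$ of entropy $h$. I first reduce to the SFT case. Let $Z^{(m)} \supseteq Z$ be the SFT obtained by forbidding exactly the length-$\leq m$ patterns absent from $Z$, and define $S_X^{(m)}, S_Y^{(m)}$ similarly; a subadditivity argument on $\log c_n$ yields $H(Z^{(m)}) \to H(Z)$. Moreover, for $m$ large enough the triple $(Z^{(m)}, S_X^{(m)}, S_Y^{(m)})$ is itself a protocol for $R^\Z$: otherwise, for each $m$ one could extract a witness $(x_m, y_m, z_m)$ satisfying the $m$-level existence condition while $(x_m, y_m) \notin R^\Z$, meaning after a shift some fixed coordinate $((x_m)_0, (y_m)_0)$ lies in the finite complement of $R$; a compactness extraction then produces a limit in $S_X \times S_Y \times Z$ with $(x_0, y_0) \notin R$, contradicting the original protocol property. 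I may thus assume $Z, S_X, S_Y$ are SFTs of common radius $r$ with entropies at most $h + \epsilon$.

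From this SFT protocol I extract a finite protocol for $R^n$. Fix any $(a, b) \in R$; given $(x_1, \ldots, x_n)$ and $(y_1, \ldots, y_n)$ in $R^n$, Alice and Bob conceptually pad with $a$ and $b$ to obtain $\bar x, \bar y \in R^\Z$, and the infinite protocol supplies $z \in Z$. The finite proof is the window $w = z[1-r, n+r]$, of length $n + 2r$, costing at most $\log c_{n+2r} \leq (n+2r)(h + 2\epsilon)$ bits for large $n$. Alice locally verifies that no forbidden pattern of $S_X$ or of $Z$ appears in $(\bar x|_{[1-r, n+r]}, w)$; Bob checks symmetrically for $S_Y$. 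The main obstacle is soundness: if both accept, one must conclude $(x_i, y_i) \in R$ for all $i$. This is where the SFT reduction is essential, since the forbidden patterns of $S_X, S_Y, Z$ decouple (each touches only two of the three coordinate blocks $x, y, z$), so the separate local checks jointly certify a locally consistent pattern of the SFT $\{(x, y, z) : z \in Z, (x, z) \in S_X, (y, z) \in S_Y\}$; an SFT extension then yields a global triple $(x^*, y^*, z^*)$ agreeing with $\bar x, \bar y$ on $[1, n]$, and the infinite protocol applied to $(x^*, y^*) \in R^\Z$ forces $(x_i, y_i) \in R$. Dividing by $n$, letting $n \to \infty$, then $\epsilon \to 0$, gives $\N^{asymp}(R) \leq h$ and concludes.
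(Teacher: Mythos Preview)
Your overall architecture matches the paper's: block a finite protocol to get the upper bound, approximate by SFTs (this is exactly the paper's auxiliary proposition), then read off a finite protocol from the SFT $L$. The first direction and the SFT reduction are fine.

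The gap is in the soundness of your finite protocol for $R^n$. You have Alice and Bob verify only that $(\bar x|_{[1-r,n+r]},w)$ and $(\bar y|_{[1-r,n+r]},w)$ contain no forbidden pattern of $S_X$, $S_Y$, $Z$, and then you assert that ``an SFT extension then yields a global triple''. But for a one-dimensional SFT, a finite block that is \emph{locally admissible} (contains no forbidden word) need not be \emph{globally admissible} (lie in the language): you also need the boundary $(r{-}1)$-blocks to be extendable, equivalently to lie in the language of $L$. Nothing in your checks guarantees this for an adversarial $w$, so the step ``extend to $(x^*,y^*,z^*)\in L$'' is unjustified as written. The paper handles exactly this point: its protocol additionally requires that the first and last $r$ letters of $(x,y,z)$ appear in some valid word of $L$ (this is why Alice and Bob exchange their boundary letters), and then invokes the standard fact that a block with no forbidden word whose two boundary $r$-blocks are in the language can be completed to a bi-infinite point.

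Your padding trick actually gives a clean fix without any extra communication: since the boundary $r$-blocks of $(\bar x,\bar y)$ are the fixed words $(a^r,b^r)$, both players know them, so each can additionally check that $(a^r,b^r,w|_{[1-r,0]})$ and $(a^r,b^r,w|_{[n+1,n+r]})$ lie in the language of $L$. With that extra local check, completeness still holds (the honest $w$ is a genuine subword of a point of $L$), and soundness goes through via the extension argument you intended. As stated, though, the proof is incomplete at precisely this step.
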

In other words, the asymptotic complexity is the same as the infinite
complexity.

Before proving the result, we need a related proposition, that states
that subshifts with simple description admit protocol with simple
descriptions:
\begin{proposition}
Let $S \subseteq X \times Y$ be a subshift of finite type.
If $(Z, S_X, S_Y)$ is a protocol for $S$, then for any $\epsilon$, 
there exists a protocol $(Z', S'_X, S'_Y)$ for $S$ where $Z', S'_X$
and $S'_Y$ are also subshifts of finite type.
\end{proposition}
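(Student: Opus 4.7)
The plan is to approximate each of $Z$, $S_X$, $S_Y$ from outside by a decreasing sequence of SFTs. For each integer $n$, let $Z_n$ denote the SFT defined by forbidding all patterns of length at most $n$ that do not appear in $Z$, and define $(S_X)_n$, $(S_Y)_n$ analogously. Each sequence is decreasing, its intersection recovers the original subshift, and a standard fact gives $H(Z_n) \searrow H(Z)$. For the given $\epsilon$, I will ultimately choose $n$ large enough that $H(Z_n) \leq H(Z) + \epsilon$; the candidate protocol will then be $(Z_n, (S_X)_n, (S_Y)_n)$.

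The forward direction of the protocol biconditional is automatic because the approximations contain the originals. The real content is the converse: for $n$ large, whenever $z \in Z_n$, $(x,z) \in (S_X)_n$ and $(y,z) \in (S_Y)_n$ with $(x,y) \in X \times Y$, we should conclude $(x,y) \in S$. Writing $L_n = \{(x,y,z) : z \in Z_n,\ (x,z) \in (S_X)_n,\ (y,z) \in (S_Y)_n\}$ and $L$ for the analogous set attached to the original protocol, this amounts to proving $\Pi_{X \times Y}(L_n) \cap (X \times Y) = S$ for $n$ large. Each $L_n$ is an SFT (an intersection of finitely many), the $L_n$ decrease to $L$, and $\Pi_{X\times Y}(L) = S$ by the protocol hypothesis. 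A compactness extraction argument shows $\bigcap_n \Pi_{X \times Y}(L_n) = S$: given a pair $(x,y)$ lying in every projection, pick witnesses $z_n$ with $(x,y,z_n) \in L_n$, extract a convergent subsequence $z_{n_k} \to z$, and use closedness of each $L_m$ together with $L_{n_k} \subseteq L_m$ for $n_k \geq m$ to conclude $(x,y,z) \in L$.

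The main obstacle is then the following purely symbolic-dynamical lemma, which closes the argument: a decreasing sequence of subshifts $(T_n)$ whose intersection is an SFT $S$ must satisfy $T_n = S$ for all sufficiently large $n$. This is again a compactness argument. Let $\mathcal{F}$ be a finite set of forbidden patterns defining $S$. If $T_n \neq S$ for infinitely many $n$, pick $t_n \in T_n \setminus S$; each $t_n$ contains some $f \in \mathcal{F}$, so by the pigeonhole principle we may fix a single $f$ and, after translating, place it at the origin of $t_n$. Extracting a convergent subsequence yields a limit $t$ still containing $f$ at the origin; yet by closedness and the decreasing property $t$ lies in every $T_k$, hence in $\bigcap_k T_k = S$, a contradiction. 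Applied to $T_n = \Pi_{X \times Y}(L_n)$, this finishes the proof.
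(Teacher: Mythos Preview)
Your proof is correct and follows essentially the same route as the paper: approximate $Z$, $S_X$, $S_Y$ from outside by SFTs, use a compactness extraction to show the resulting recognized sets $T_n=\Pi_{X\times Y}(L_n)$ decrease to $S$, and then invoke the SFT hypothesis on $S$ to conclude stabilization $T_n=S$ for large $n$, finally using $H(Z_n)\searrow H(Z)$ for the $\epsilon$-bound. Your version is in fact slightly more explicit than the paper's at the key step: the paper simply says ``$S$ is of finite type, so defined by finitely many patterns; at some point $n_0$ all these patterns will be forbidden,'' whereas you isolate and prove the clean lemma that a decreasing sequence of subshifts with SFT intersection must eventually stabilize, which is exactly what is needed.
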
	
\begin{proof}
Let $(Z, S_X, S_Y)$ be a protocol for $S$.

$Z, S_X$ and $S_Y$ are defined by families of forbidden patterns, denoted
by $Z^n$, $S^n_X$ and $S^n_Y$, the subshifts forbidding only the first $n$
patterns. Hence $S_X = \cap_n S^n_X$, and similarly for $Z, S_Y$.

If we do the same protocol with $S^n_X$ instead of $S_X$, we will
recognize a superset of $S$ that we call $S^n$.
Let us prove $S = \cap_n S^n$, one inclusion being
obvious. If $(x,y)\in
\cap_n S^n$, then there exists $z_n \in Z^n$ such that $(x, z_n) \in
S^n_X$ and $(y, z_n) \in S^n_Y$.
By compactness, the sequence $(z_n)$ admits a limit point $z \in Z$,
that satisfies $(x,z) \in S_X$ and $(y,z) \in S_Y$, hence $(x,y) \in S$.

But $S$ is supposed to be of finite type, so defined by finitely many
patterns. At some point $n_0$ all these patterns will be forbidden,
that is $S = \cap_{n < n_0} S^n$.
But this means that for all $n \geq n_0$,  $(Z^n, S^{n}_X, S^{n}_Y$) is a protocol
for $S$, for which all subshifts involved are of finite type.

Now, as entropy is upper-semicontinuous for shift spaces, $H(Z) =
\lim_n H(Z_n)$, hence for $n \geq n_0$ big enough, we will have
$H(Z_n) < H(Z) + \epsilon$ which proves the proposition.
\end{proof}	

\begin{proof}[Proof of Theorem~\ref{t:asymp}]
\begin{itemize}
\item First, let us prove $\N(R^\mathbb{Z}) \leq \N^{asymp}(R)$.
  Let $\epsilon > 0$ and let $(Z, S_{X^n}, S_{Y^n})$ be a (finite!)
	protocol for $R^n$ of complexity at most $n\N^{asymp}(R)+n\epsilon$, that
	is $\log|Z|\leq n\N^{asymp}(R)+n\epsilon$.
	We build a protocol for $R^\mathbb{Z}$ as follows.
	
    Let $Z'$ be the subshift over the language $Z \cup \{\bot\}$
	defined as follows: a word $w$ is in $Z'$ if and only if every
	factor of $w$ of length $n$ contains exactly one letter in $Z$.
	In other words, a word in $Z'$ contains a letter in $Z$, then
	$n-1$ symbols $\bot$, then a letter in $Z$, ad libitum.
	It is clear that $Z'$ is of entropy $\frac{\log|Z|}{n} \leq 	\N^{asymp}(R) + \epsilon$.
		
	We now define $S_{X^\mathbb{Z}}$ as follows: 
    $(x,z) \in 	S_{X^\mathbb{Z}}$ if and only if
	$z \in Z'$ and, if we denote by $I = i+n\mathbb{Z}$ the positions
	in $z$ where the letter is not $\bot$, then for all $j$,
	$(x_{i+jn}x_{i+1+jn}\dots x_{i+n-1+jn}, z_{i+jn}) \in S_{X^n}$.
	We define $S_{Y^\mathbb{Z}}$ in the same way.
	
	It is clear from the definition that we obtain this way a protocol
	for $R^\mathbb{Z}$ of size at most $\N^{asymp}(R) + \epsilon$,
	which gives the result.				
\item $\N(R^\mathbb{Z}) \leq \N^{asymp}(R)$.
  Let $\epsilon > 0$ and let $(Z, S_{X^\mathbb{Z}}, S_{Y^\mathbb{Z}})$
  be a (infinite) protocol for $R^\mathbb{Z}$. 
  As $R^\mathbb{Z}$ is of finite type (it is defined by forbidden
  patterns of size $1$), we may suppose by the
  previous proposition that the protocol is of finite type.
  
  Let $L$ as above be the set of tuples $(z,x,y)$ such that 
  $z \in Z$, $(x,z)\in S_{X^\mathbb{Z}}$, $(y,z) \in S_{Y^\mathbb{Z}}$.
  $L$ is a subshift of finite type (it is the intersection of three
  subshifts of finite type), hence can be defined by a finite set of
  forbidden words, say of size $r$.
  
 Let $n \geq r$. We now describe a protocol for $R^n$.  
 On input $(x,y)$, we send to Alice and Bob a word $z$ of size $n$
 that is valid for $Z$,
 and Alice (resp. Bob) sends to Bob (resp. Alice) its first and last $r$ letters.
 Now Alice looks whether the pattern $(x,z)$ appears in some valid
 word of
 $S_{X^\mathbb{Z}}$, and whether the two patterns of length $r$ that Bob sent to her
 appears in some valid word of $L$.
In this case, Alice accepts. Bob does the same.

Now, if $(x,y) \in R^n$, it is clear that the above protocol works:
just complete $(x,y)$ to obtain an infinite word $(x^\infty,y^\infty$)
with $(x,y)$ in its center in $R^\mathbb{Z}$,
take $z^\infty$ to be the word that proves that $(x^\infty,y^\infty)$
is indeed valid, and take for $z$ the central $n$ letters of $z^\infty$.

Conversely, suppose that $(x,y)$ is accepted by the protocol.
We now look at the word $(x,y,z)$ of size $n$ we obtain.
By construction, this word does not contain any forbidden word of $L$.
Furthermore, its first and last $r$ letters appear in some (possibly
different) infinite words of $L$; this means that we can complete it into an infinite word in $L$.
This proves that there exists an infinite word $(x^\infty, y^\infty,
z^\infty)$ in $L$ with $(x,y,z)$ at its center, hence that $(x^\infty,
y^\infty) \in R^\mathbb{Z}$, which implies that $(x,y) \in R^n$.

If we denote by $c_n$ the number of words of $Z$ of size $n$, then
this protocol uses $\log c_n + 4\log r$ bits, that is
 $\N(R^n) \leq \log c_n + 4 \log r$,  therefore \[\N^{asymp}(R) = \lim_n
 \frac{\N(R^n)}{n} \leq \lim_n \frac{\log c_n}{n} = H(Z) \leq
 \N(R^\mathbb{Z}) + \epsilon~.\]
\end{itemize}
\end{proof}	

\section{Application to 2D languages}

As hinted above, the main motivation comes from the theory of 2D
languages and in particular the definition of regular
languages of infinite pictures, which are called sofic in this context.

The best way to define sofic (2D-)shifts uses the well-known concept of
Wang tiles from tiling theory, as introduced by Hao Wang
\cite{wangpatternrecoII}.
 In our context, a Wang \emph{tile} is a unit square
with colored edges and some symbol $x\in \Sigma$ at its center.
A tiling by a finite set $\tau$ of Wang tiles associates to each point of
the discrete plane $\mathbb{Z}^2$ a Wang tile  so that
contiguous edges have the same color.
By looking at the symbol at the center of each tile, a tiling by $\tau$
gives rises to an infinite picture $w \in \Sigma^{\mathbb{Z}^2}$.
The sofic shift defined by $\tau$ is then the set of infinite pictures we
obtain this way.

A first example is presented in Figure~\ref{fig:wang}. This set of Wang
tiles produces a lot of different tilings but only two different
pictures (up to translation): one with the symbol $0$ everywhere, and
the other one with only one occurrence of the symbol $1$. Hence the set
of pictures over the alphabet $\{0,1\}$ containing at most one
occurrence of the symbol $1$ is a sofic shift.

\newcommand\Wangs[6]{ 
        \draw[line width=0.1pt,fill=#2] (#1) -- +(1,1) -- +(1,3) -- +(0,4) -- +(0,0);
        \draw[line width=0.1pt,fill=#3] (#1) -- +(1,1) -- +(3,1) -- +(4,0) -- +(0,0);
                \begin{scope}[xshift=2cm, yshift=2cm]
                                        \node (#1) {#6};
                \end{scope}
\begin{scope}[xshift=4cm, yshift=4cm]
        \draw[line width=0.1pt,fill=#4] (#1) -- +(-1,-1) -- + (-1,-3) -- +(0,-4) -- +(0,0);
        \draw[line width=0.1pt,fill=#5] (#1) -- +(-1,-1) -- + (-3,-1) -- +(-4,0) -- +(0,0);
\end{scope}     
}

\newsavebox\WangA
\sbox\WangA{\tikz[scale=0.11]{
                \useasboundingbox (-1,1.5) rectangle (5,4);
\ifejcolor              
\Wangs{0,0}{blue}{blue}{blue}{blue}{{\tiny 0}}
\else
\Wangs{0,0}{black!70}{black!70}{black!70}{black!70}{{\tiny 0}}
\fi                             
}}

\newsavebox\WangB
\sbox\WangB{\tikz[scale=0.11]{
                \useasboundingbox (-1,1.5) rectangle (5,4);
\ifejcolor                              
\Wangs{0,0}{blue}{blue}{red}{red}{{\tiny 1}}
\else
\Wangs{0,0}{black!70}{black!70}{black!45}{black!45}{{\tiny\textbf{1}}}
\fi
}}

\newsavebox\WangC
\sbox\WangC{\tikz[scale=0.11]{
				\useasboundingbox (-1,1.5) rectangle (5,4);
\ifejcolor                              
\Wangs{0,0}{blue}{red}{yellow}{red}{{\tiny 0}}
\else
\Wangs{0,0}{black!70}{black!45}{black!20}{black!45}{{\tiny 0}}
\fi                             
}}

\newsavebox\WangD
\sbox\WangD{\tikz[scale=0.11]{
                \useasboundingbox (-1,1.5) rectangle (5,4); 
\ifejcolor                              
\Wangs{0,0}{red}{blue}{red}{yellow}{{\tiny 0}}
\else
\Wangs{0,0}{black!45}{black!70}{black!45}{black!20}{{\tiny 0}}
\fi
}}

\newsavebox\WangE
\sbox\WangE{\tikz[scale=0.11]{
                \useasboundingbox (-1,1.5) rectangle (5,4);
\ifejcolor                              
\Wangs{0,0}{yellow}{yellow}{yellow}{yellow}{{\tiny 0}}
\else
\Wangs{0,0}{black!20}{black!20}{black!20}{black!20}{{\tiny 0}}
\fi
}}

\begin{figure}

\[
        \tau = \left\{
\tikz[scale=0.2]{\useasboundingbox (-1,1.5) rectangle (5,4);  
\ifejcolor
\Wangs{0,0}{blue}{blue}{blue}{blue}{0}
\else
\Wangs{0,0}{black!70}{black!70}{black!70}{black!70}{0}
\fi
}
,
\tikz[scale=0.2]{\useasboundingbox (-1,1.5) rectangle (5,4);
\ifejcolor  
  \Wangs{0,0}{blue}{blue}{red}{red}{1}
\else
  \Wangs{0,0}{black!70}{black!70}{black!45}{black!45}{1}
\fi
  }
,
\tikz[scale=0.2]{\useasboundingbox (-1,1.5) rectangle (5,4);
\ifejcolor
  \Wangs{0,0}{blue}{red}{yellow}{red}{0}
\else
  \Wangs{0,0}{black!70}{black!45}{black!20}{black!45}{0}
\fi
  }
,
\tikz[scale=0.2]{\useasboundingbox (-1,1.5) rectangle (5,4);
\ifejcolor
  \Wangs{0,0}{red}{blue}{red}{yellow}{0}
\else
  \Wangs{0,0}{black!45}{black!70}{black!45}{black!20}{0}
\fi  
  }
,
\tikz[scale=0.2]{\useasboundingbox (-1,1.5) rectangle (5,4);
\ifejcolor  
  \Wangs{0,0}{yellow}{yellow}{yellow}{yellow}{0}
\else
  \Wangs{0,0}{black!20}{black!20}{black!20}{black!20}{0}
\fi
  }
\right\}
\]

\begin{center}
        \begin{tikzpicture}[scale=0.4]
                \useasboundingbox (0,-1) rectangle (10,11);
                \draw (4,-1) node {$B$};
                \clip[decorate, decoration={zigzag,segment length=9mm}] (0.5,0.5) rectangle  (7.5,9.5);
        \foreach \a in {0,1.1,2.2,3.3,4.4,5.5,6.6,7.7,8.8,9.9,11}
                {
                  \foreach \b in {0,1.1,2.2,3.3,4.4,5.5,6.6,7.7,8.8,9.9,11}
                  {
                        \node at (\b cm, \a cm) {\usebox\WangA};                        
                        }
                  }
        \end{tikzpicture}
        \begin{tikzpicture}[scale=0.4]
                \useasboundingbox (0,-1) rectangle (10,11);
                \draw (4,-1) node {$Y$};
                \clip[decorate, decoration={zigzag,segment length=9mm}] (0.5,0.5) rectangle  (7.5,9.5);
        \foreach \a in {0,1.1,2.2,3.3,4.4,5.5,6.6,7.7,8.8,9.9,11}
                {
                  \foreach \b in {0,1.1,2.2,3.3,4.4,5.5,6.6,7.7,8.8,9.9,11}
                  {
                        \node at (\b cm, \a cm) {\usebox\WangE};
                        }
                 }
        \end{tikzpicture}
\end{center}
\begin{center}
        \begin{tikzpicture}[scale=0.4]
                \useasboundingbox (0,-1) rectangle (10,11);
                \draw (4,-1) node {$V_i$};
                \clip[decorate, decoration={zigzag,segment length=9mm}] (0.5,0.5) rectangle  (7.5,9.5);
        \foreach \a in {0,1.1,2.2,3.3,4.4,5.5,6.6,7.7,8.8,9.9,11}
                {
                  \node at (4.4 cm, \a cm) {\usebox\WangC};                 
                  \foreach \b in {0,1.1,2.2,3.3}
                  {
                        \node at (\b cm, \a cm) {\usebox\WangA};
                        }
                  \foreach \b in {5.5,6.6,7.7,8.8,9.9,11}
                  {
                        \node at (\b cm, \a cm) {\usebox\WangE};
                        }
                  }
        \end{tikzpicture}
        \begin{tikzpicture}[scale=0.4]
                \useasboundingbox (0,-1) rectangle (10,11);
                \draw (4,-1) node {$H_j$};
                \clip[decorate, decoration={zigzag,segment length=9mm}] (0.5,0.5) rectangle  (7.5,9.5);
        \foreach \a in {0,1.1,2.2,3.3,4.4,5.5,6.6,7.7,8.8,9.9,11}
                {
                  \node at (\a, 5.5 cm) {\usebox\WangD};
                  \foreach \b in {0,1.1,2.2,3.3,4.4}
                  {
                        \node at (\a cm, \b cm) {\usebox\WangA};
                        }
                  \foreach \b in {6.6,7.7,8.8,9.9,11}
                  {
                        \node at (\a cm, \b cm) {\usebox\WangE};
                        }
                  }
        \end{tikzpicture}
        \begin{tikzpicture}[scale=0.4]
                \useasboundingbox (0,-1) rectangle (10,11);
                \draw (4,-1) node {$C_{i,j}$};
                \clip[decorate, decoration={zigzag,segment length=9mm}] (0.5,0.5) rectangle  (7.5,9.5);
                \foreach \a in {5.5,6.6,7.7,8.8,9.9}
                {
                  \node at (\a, 5.5 cm) {\usebox\WangD};
                  \foreach \b in {0,1.1,2.2,3.3,4.4}
                  {
                        \node at (\a cm, \b cm) {\usebox\WangA};
                        }                 
                  \foreach \b in {6.6,7.7,8.8,9.9}
                  {
                        \node at (\a cm, \b cm) {\usebox\WangE};
                        }
                                          }
                                        
                \foreach \a in {6.6,7.7,8.8,9.9}
                                {
                  \node at (4.4 cm, \a cm) {\usebox\WangC};
                  }
                \node at (4.4cm, 5.5cm) {\usebox\WangB};
                \node at (4.4cm, 4.4cm) {\usebox\WangA};

                \foreach \a in {0,1.1,2.2,3.3}
                {
                  \foreach \b in {0,1.1,2.2,3.3,4.4,5.5,6.6,7.7,8.8,9.9}
                  {
                        \node at (\a cm, \b cm) {\usebox\WangA};
                }
                          \node at (4.4 cm, \a cm) {\usebox\WangA};
                }
        \end{tikzpicture}
\end{center}  
\caption{A set of Wang tiles and the tilings we obtain. There are
  countably many tilings but only 5 of them up to translation. All
  tilings except $C_{i,j}$ correspond to an infinite picture with the
  symbol $0$ everywhere, while the tiling $C_{i,j}$ corresponds to an
  infinite picture with the
  symbol $0$ everywhere except on position $(i,j)$, which contains the
  symbol
  $1$.}
\label{fig:wang}
\end{figure}
The main question we want to tackle is a way to decide whether a
given set of infinite pictures is indeed a sofic (2D-)shift.
For one-dimensional shifts, where sofic shifts can be defined in a
similar way, a shift $S$ is sofic if and only if the set of finite
words it contains is regular.


\begin{proposition}
	\label{prop:shift}
	Let $S$ be a two-dimensional shift over an alphabet $A$.
	
	For $n$ an integer, denote by $L_n(S)$ the set of infinite
	words over $A^n$ that may appear as $n$ consecutives rows in some
	element of $S$.	
    Let	
	\[ R_{n,m}(S) = \{ (x,y) \in L_n(S)  \times L_m(S) | xy \in
		  L_{n+m}(S)\}~.\]
		
If $S$ is sofic, then $\N(R_{n,m}(S) = O(1)$ (independently of $n$ and
$m$).
\end{proposition}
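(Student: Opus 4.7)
The plan is to exhibit, for each pair $(n,m)$, a single protocol whose size depends only on the Wang tile set defining $S$, not on $n$ or $m$. Since $S$ is sofic, fix a finite set $\tau$ of Wang tiles producing $S$, and let $C$ denote its finite set of horizontal edge colors. The key observation is that in any tiling by $\tau$ realizing a pair $(x,y) \in R_{n,m}(S)$, the full interaction between the $n$ rows of $x$ and the $m$ rows of $y$ is mediated by the biinfinite sequence of horizontal edge colors at the interface between them, and this sequence lives in a full shift $C^\mathbb{Z}$ whose entropy $\log|C|$ depends only on $\tau$.

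First I would set $Z = C^\mathbb{Z}$ and define $S_X \subseteq L_n(S) \times Z$ to be the set of pairs $(x,z)$ for which there exists a tiling $t \in \tau^{\mathbb{Z}^2}$ such that the symbol projection of $t$ on rows $1,\dots,n$ equals $x$ and the horizontal edges between row $n$ and row $n+1$ of $t$ coincide with $z$. Symmetrically, $S_Y \subseteq L_m(S) \times Z$ collects pairs $(y,z)$ such that some tiling by $\tau$ places $y$ on rows $1,\dots,m$ with bottom horizontal edges given by $z$. Both sets are shift-invariant by construction, and closed by compactness of $\tau^{\mathbb{Z}^2}$; hence they are subshifts, and $(Z,S_X,S_Y)$ is of size $\log|C|$.

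Next I would check that $(Z,S_X,S_Y)$ is a protocol for $R_{n,m}(S)$. For the forward direction, any tiling realizing $xy$ yields an interface word $z \in Z$, and restricting the tiling to the lower and upper half-planes witnesses $(x,z) \in S_X$ and $(y,z) \in S_Y$ respectively. For the converse, given witnessing tilings for $(x,z)$ and $(y,z)$, their horizontal edges agree along the interface prescribed by $z$, so they glue into a single valid tiling of $\mathbb{Z}^2$ by $\tau$, which projects to a picture in $S$ containing $xy$ as $n+m$ consecutive rows; thus $(x,y) \in R_{n,m}(S)$. Since $H(Z) = \log|C|$ is independent of $n$ and $m$, this gives $\N(R_{n,m}(S)) = O(1)$.

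The step that requires the most care is confirming that $S_X$ and $S_Y$ really are subshifts, because the defining condition is existential over a tiling of the entire plane rather than a local constraint; however, this reduces to a routine compactness argument, since if $(x_k,z_k) \to (x,z)$ and each pair admits a witnessing tiling $t_k \in \tau^{\mathbb{Z}^2}$, then any accumulation point of $(t_k)$ in that compact space remains a valid tiling and witnesses $(x,z)$. Everything else, including shift invariance and the gluing argument at the interface, is immediate from the local matching conditions of Wang tiles.
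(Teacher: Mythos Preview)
Your proof is correct and follows essentially the same approach as the paper: both use the biinfinite word of horizontal edge colors at the interface between Alice's and Bob's strips as the protocol word $z$, with $H(Z)=\log|C|$ depending only on the tile set $\tau$. The paper's proof is a two-sentence sketch; you have simply filled in the details (verifying that $S_X,S_Y$ are subshifts via compactness, and checking both directions of the protocol equivalence) that the paper leaves implicit.
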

\begin{proof}
Let $\tau$ be the set of Wang tiles that defines $S$.
The protocol is obvious. Alice, on input $x$, chooses a way to tile
its part of the space (that may be extended into a tiling of an entire
half-plane), and sends to Bob what are the colors in their common border.
Then Bob accepts iff he can tile its part of the space (and the rest
of the plane) respecting this border condition.
\end{proof}

The proposition is not a characterisation. It turns out however that
it becomes a characterisation for one-dimensional languages.

\begin{theorem}
	Let $S$ be a one-dimensional shift. Let $L_n(S)$ be the set of words
	of size $n$ of $S$ and	
	\[ R_{n,m}(S) = \{ (x,y) \in L_n(S)  \times L_m(S) | xy \in
		  L_{n+m}(S)\}~.\]		
Then $S$ is sofic iff $\N(R_{n,m}(S)) = O(1)$.
\end{theorem}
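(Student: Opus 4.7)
My plan treats the two implications separately.

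\emph{Easy direction (sofic $\Rightarrow$ $O(1)$):} If $S$ is sofic, I would fix a finite labeled graph $\mathcal{G}=(Q,\delta)$ presenting $L(S)$: a finite word $w$ lies in $L(S)$ iff it labels some path in $\mathcal{G}$. Any $(x,y)\in R_{n,m}(S)$ is realised by a path for $xy$ in $\mathcal{G}$, which passes through some state $q\in Q$ at the boundary of $x$ and $y$. The protocol takes $Z=Q$, $S_X=\{(x,q): x$ labels a $\mathcal{G}$-path ending at $q\}$, and $S_Y=\{(y,q): y$ labels a $\mathcal{G}$-path starting at $q\}$; it is sound and complete by concatenation of paths. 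Its size $\log|Q|$ is independent of $n$ and $m$, so $\N(R_{n,m}(S))=O(1)$.

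\emph{Hard direction ($O(1)\Rightarrow$ sofic):} Assume $\N(R_{n,m}(S))\leq C$ for all $n,m$, and set $N=2^C$. For each $(n,m)$, I fix a cover of $R_{n,m}(S)$ by at most $N$ combinatorial rectangles $A_k^{n,m}\times B_k^{n,m}$, and assign to each $x\in L_n(S)$ the type $\sigma_m(x):=\{k: x\in A_k^{n,m}\}\subseteq\{1,\ldots,N\}$. If $\sigma_m(x)=\sigma_m(x')$, then $F(x)\cap L_m(S)=F(x')\cap L_m(S)$, where $F(w):=\{u\in L(S): wu\in L(S)\}$ is the right-follower set; so the equivalence $\sim_m$ on $L_n(S)$ defined by equality of length-$m$ follower sets has at most $2^N$ classes. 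Since any $xu\in L(S)$ extends to some $xua\in L(S)$, $\sim_{m+1}$ refines $\sim_m$: if $u\in F(x)\cap L_m$, extend to $ua\in F(x)\cap L_{m+1}=F(x')\cap L_{m+1}$ (using $x\sim_{m+1}x'$), whence $u\in F(x')$ because $x'u$ is a prefix of $x'ua\in L(S)$. The sequence $(\sim_m)_m$ is non-decreasing in refinement and bounded by $2^N$ classes, so it stabilises to the follower-set equivalence $\sim$, yielding $|\mathcal{E}_n|\leq 2^N$ where $\mathcal{E}_n:=\{F(x): x\in L_n(S)\}$.

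The decisive and hardest step is to upgrade this per-length bound to the global bound $|\bigcup_n\mathcal{E}_n|<\infty$, which is exactly the soficity of $S$. The set $\mathcal{F}:=\bigcup_n\mathcal{E}_n$ carries a directed-graph structure (the Fischer cover) with root $L(S)$, transitions $F\xrightarrow{a}a^{-1}F$ whenever $a\in F\cap L_1(S)$, out-degree at most $|A|$, and breadth-first level at depth $n$ equal to $\mathcal{E}_n$. My plan is also to exploit the symmetric past-set bound obtained by swapping the roles of Alice and Bob: on every $L_m(S)$ the past-equivalence $P(y):=\{x: xy\in L(S)\}$ has at most $2^N$ classes. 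I would then argue, using factoriality of $L(S)$ (every factor of an $L(S)$-word is in $L(S)$) together with the simultaneous future- and past-level bounds, that the orbit of $\mathcal{E}_0=\{L(S)\}$ under the set-valued transition $\mathcal{T}(\mathcal{E}):=\{a^{-1}F: F\in\mathcal{E},\, a\in F\cap L_1(S)\}$ must be eventually periodic, hence $\mathcal{F}$ finite: any strictly growing chain of new follower sets would, via the shift symmetries, force unbounded past-classes on some cross-section, contradicting the bound. This combinatorial step—parallel to the classical symbolic-dynamics argument that a finite Fischer cover characterises soficity—is the main obstacle and the technical heart of the proof.
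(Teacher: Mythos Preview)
Your easy direction is fine and matches the paper's. For the hard direction, steps 1--4 are correct and yield $|\mathcal{E}_n|\le 2^N$ for every $n$; this is essentially the contrapositive of the bound the paper extracts.

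The genuine gap is step 5. Your proposed resolution via past-sets and an ``eventually periodic'' orbit of $\mathcal{T}$ is not an argument: knowing only $|\mathcal{E}_n|\le K$ and $|\mathcal{P}_m|\le K$ for all $n,m$ does not force the sequence $(\mathcal{E}_n)_n$ to repeat, because the $\mathcal{E}_n$ live in an a priori infinite set of follower sets and no pigeonhole applies. The clause ``a strictly growing chain of new follower sets would, via the shift symmetries, force unbounded past-classes on some cross-section'' does not name any concrete mechanism, and it is not clear how bounded past-classes on each $L_m$ would obstruct infinitely many distinct future-classes spread over different lengths.

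The missing ingredient is a one-line observation you already used (in its right-handed form) in step 4: by \emph{left}-extendability of subshift languages, $F(u)=\bigcup_{a\in A,\,au\in L(S)}F(au)$. Iterating, every $F\in\mathcal{E}_j$ with $j\le n$ is a union of members of $\mathcal{E}_n$, hence equals $\bigcup\{G\in\mathcal{E}_n:G\subseteq F\}$ and is determined by a subset of $\mathcal{E}_n$. Thus $\bigl|\bigcup_{j\le n}\mathcal{E}_j\bigr|\le 2^{|\mathcal{E}_n|}\le 2^{2^N}$ for every $n$, whence $\bigl|\bigcup_n\mathcal{E}_n\bigr|\le 2^{2^N}$ and $S$ is sofic. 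This is exactly the observation the paper uses (phrased there as $u^{-1}S=(0u)^{-1}S\cup(1u)^{-1}S$) in its contrapositive proof; once you insert it, your argument and the paper's become the same proof run in opposite directions, and the detour through past-sets and Fischer-cover periodicity is unnecessary.
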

\begin{proof}
One direction is clear.
Suppose that $S$ is not sofic. To simplify the exposition, we assume that $S$
is over the alphabet $\{0,1\}$.
Let $L(S)$ be the set of finite words that might appear in $S$.
If $S$ is not sofic, then $L(S)$ is not regular;
This implies that there exists a sequence of words $(u_i)_{i \in
  \mathbb{N}}$ in the language of $S$ such that all residual sets  
  \[ u_i^{-1} S = \{ x \in \{0,1\}^\star| u_i x \in L(S) \}  \] are distinct.	  
  Wlog we may suppose that words in the sequence $(u_i)$ are of increasing size, and that $u_i$ is of size at least $i$.
  
  Let $n$ be an integer, and denote by $k$ the size of $u_n$.
  We will prove that there exists $m$ so that $\N(R_{k,m}) \geq \log\log\log n$, hence $\N(R_\cdot,\cdot)$ is not bounded.
	
  By definition, for any word $u$,  $u^{-1} S = (0u)^{-1} S \cup (1u)^{-1} S$.  
  Hence all residual sets for words of size less than $k$ can be expressed in
  terms of residual sets for words of size $k \geq n$.  
  As $u_1 \dots u_n$ (of size less than or equal to $k$) give $n$ different residual sets, this implies in
  particular that there should be at least $\log n$ different residual
  sets for words of size $k$.
  
  As there are finitely many residual sets for words of size $k$,
  there exists a constant $m$ such that if $u^{-1} S \not= v^{-1}S$,
  for $u,v$ of size $k$, then there exists $w$ of size at most $m$ so
  that $uw \in S \iff vw \not\in S$.  
  As any word in $S$ of size less than $m$ can be prolonged into a
  word of $S$ of size $m$, we may suppose that $w$ is of size exactly $m$.
  
  We therefore have obtained the following: We have a family
   $v_1 \dots v_{\log	n}$ in
  $L_k(S)$ for which if $i \not= j$, there exists $w \in L_m(S)$ so
  that $v_i w \in S \iff v_j w \not\in S$.
  
  This implies in  particular that in a protocol $(Z, S_A, S_B)$ for
  $R_{k,m}$, each $v_i$ must issue a different set of responses $z \in
  Z$, which implies that $2^{|Z|}\geq \log n$.  
  This implies that $\log |Z| \geq \log \log \log n$.
  This is true for any protocol, hence $\N(R_{k,m}) \geq \log \log \log n$.        
\end{proof}

Proposition~\ref{prop:shift} gives some insight into the extension
conjecture.
\begin{definition}
If $S$ is a set of infinite words, let $S^\mathbb{Z}$ denote the set
of infinite pictures, where each row is an element of $S$, different rows possibly corresponding to different elements of $S$.
\end{definition}
\begin{conj}[Extension conjecture]
$S^\mathbb{Z}$ is sofic only if $S$ is sofic.	
\end{conj}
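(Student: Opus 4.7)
The plan is to reduce the conjecture to the one-dimensional characterisation of soficity established just above: a 1D subshift $S$ is sofic if and only if $\N(R_{n,m}(S)) = O(1)$. Under the hypothesis that $S^{\mathbb{Z}}$ is sofic, it is therefore enough to exhibit, for each $n$ and $m$, a nondeterministic protocol for the finite relation $R_{n,m}(S) \subseteq L_n(S) \times L_m(S)$ whose size is bounded independently of $n$ and $m$. The natural attempt is to mimic the protocol from the proof of Proposition~\ref{prop:shift}: fix a set of Wang tiles $\tau$ realising $S^{\mathbb{Z}}$; given $(x,y)$ with $xy \in L_{n+m}(S)$, Alice would guess a row of $n$ tiles of $\tau$ whose decoding equals $x$, Bob would guess a row of $m$ tiles decoding $y$, and they would exchange the tile-colour along their common vertical boundary. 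The transmitted certificate is one colour of $\tau$, costing $O(\log|\tau|)$ bits, independent of $n,m$, which matches the target bound.

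The essential obstacle is soundness. What the two parties have certified together is only the existence of a horizontally valid tile-row of length $n+m$ whose decoding equals $xy$; they have said nothing about its extendability upward and downward into a full $\tau$-tiling of $\mathbb{Z}^{2}$. Without that vertical extendability, the decoding need not lie in $L_{n+m}(S)$, since $L_{n+m}(S)$ corresponds to factors of rows appearing in genuine pictures of $S^{\mathbb{Z}}$. The hard part of the proof would therefore be to enrich the certificate $z$ so that vertical extendability becomes locally checkable with only $O(1)$ extra bits. A plausible first move is to replace $\tau$ by an equivalent presentation that is vertically recurrent or mixing, so that any locally valid horizontal row automatically extends to a full tiling; one would then have to verify that such a regularisation is always possible when the realised shift has the product form $S^{\mathbb{Z}}$, where each row is an independent element of $S$.

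If the direct approach resists, an alternative angle is to use the amortisation theorem (Theorem~\ref{t:asymp}): $\N((R_{n,m}(S))^{\mathbb{Z}}) = \N^{asymp}(R_{n,m}(S))$. Working on the infinite side one could try to exploit the full 2D sofic structure of $S^{\mathbb{Z}}$ directly, building a constant-size infinite protocol for some auxiliary infinite relation that captures $R_{n,m}(S)$ column-wise, and then transfer the bound back to the finite case. Either way, the crux lies in translating a global 2D tiling constraint into finite, communication-efficient information about 1D factors, and this is exactly where the conjecture has resisted a direct attack. I would expect any successful plan to rely on a new idea for witnessing vertical compactness of $\tau$-tilings by $O(1)$ bits of information.
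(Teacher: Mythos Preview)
The statement you are trying to prove is labelled a \emph{Conjecture} in the paper, and the paper does not prove it; in the Open Problems section the authors explicitly list ``proving the extension conjecture, in particular using infinite communication complexity'' as open. So there is no paper proof to compare against.

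Your proposal is not a proof either, and you correctly locate the gap yourself: the single boundary colour certifies only a horizontally valid tile row, not its vertical extendability to a full tiling, so soundness fails. The suggestion to regularise $\tau$ into a presentation where every locally valid row extends is exactly the unresolved issue; there is no known general construction achieving this for arbitrary 2D sofic shifts, and nothing about the product form $S^{\mathbb{Z}}$ of the target obviously helps, since the Wang-tile presentation realising $S^{\mathbb{Z}}$ need not itself have any product structure.

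Your ``alternative angle'' via amortisation is precisely what the paper carries out. From soficity of $S^{\mathbb{Z}}$ one gets $\N(R_{n,n}(S^{\mathbb{Z}}))=O(1)$ by Proposition~\ref{prop:shift}; since $R_{n,n}(S^{\mathbb{Z}})=R_{n,n}(S)^{\mathbb{Z}}$, Theorem~\ref{t:asymp} gives $\N^{asymp}(R_{n,n}(S))=O(1)$. But transferring this back to the unamortised finite problem via the direct-sum bound (Theorem~\ref{thm:directsum}) loses a $\log n$ additive term, yielding only $\N(R_{n,n}(S))\le \log|\Sigma|\log n + O(1)$ (the paper's Corollary), not the $O(1)$ needed for the 1D characterisation. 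The paper then exhibits a concrete non-sofic $S$ (forbidding $ca^k d b^k c$) for which $\N^{asymp}(R_{n,n}(S))=O(1)$ while $\N(R_{n,n}(S))\sim\log\log n$, showing that this amortisation route alone cannot settle the conjecture.
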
	
This conjecture was proven in some particular cases
\cite{Pavlov,MaddenKass} that may be seen as instances of a general
communication complexity argument that we formulate now.

Indeed, suppose that $S^\mathbb{Z}$ is sofic. Then by the previous
proposition, $\N(R_{n,n}(S^\mathbb{Z})) = O(1)$.
But $R_{n,n}(S^\mathbb{Z}) = R_{n,n}(S)^\mathbb{Z}$ which means that results
from the previous section can be applied:
$\N(R_{n,n}(S\mathbb{Z})) = \N^{asymp}(R_{n,n}(S))$.

Now, well-known results from Communication Complexity about direct sums
 permit to estimate $\N^{asymp}(R)$ from
$\N(R)$:
\begin{theorem}[{\cite{FracCov},\cite[Corollary 4.9]{Comm}}]
	\label{thm:directsum}
For any relation $R \subseteq \{0,1\}^m \times \{0,1\}^m$, we
have $\N^{asymp}(R) \geq \N(R) - \log m + O(1)$.
\end{theorem}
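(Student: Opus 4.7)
The plan is to pass through the fractional cover number $C^{*}(R)$, prove that it is multiplicative under direct product, and show that it approximates the integral cover number $C(R)$ up to a factor of $O(m)$. Recall that a nondeterministic protocol $(Z,S_X,S_Y)$ for $R$ associates to each witness $z \in Z$ the combinatorial rectangle $A_z \times B_z$ with $A_z = \{x : (x,z) \in S_X\}$ and $B_z = \{y : (y,z) \in S_Y\}$, and that the union of these rectangles is exactly $R$. Hence $\N(R) = \lceil \log C(R) \rceil$, where $C(R)$ denotes the minimum number of $R$-rectangles (i.e.\ combinatorial rectangles contained in $R$) needed to cover $R$.

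Next I would introduce the LP relaxation $C^{*}(R)$: minimise $\sum_r w_r$ over nonnegative weights on $R$-rectangles subject to $\sum_{r \ni (x,y)} w_r \geq 1$ for every $(x,y)\in R$. Trivially $C^{*}(R) \leq C(R)$, so $\N(R^n) \geq \log C^{*}(R^n)$. The main step is to prove the multiplicativity $C^{*}(R^n) = C^{*}(R)^n$. The upper bound is easy: tensoring optimal fractional covers of $R$ gives a feasible fractional cover of $R^n$ of value $C^{*}(R)^n$. For the matching lower bound I would use LP duality. The dual is the fractional packing problem: maximise $\sum_{(x,y)\in R} \mu(x,y)$ over nonnegative measures on $R$ with $\sum_{(x,y)\in r}\mu(x,y) \leq 1$ for every $R$-rectangle $r$. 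Since any $R^n$-rectangle is a product of $R$-rectangles, the product of optimal dual packings for $R$ is feasible for $R^n$, yielding $C^{*}(R^n) \geq C^{*}(R)^n$. Dividing by $n$ and taking the limit, $\N^{asymp}(R) \geq \log C^{*}(R)$.

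The final ingredient is the standard greedy/randomised rounding: given an optimal fractional cover of value $C^{*}(R)$, drawing rectangles independently with probability proportional to their weights and repeating $O(\log|R|)$ rounds produces an integer cover of size $O(C^{*}(R)\log |R|)$. Since $|R|\leq 2^{2m}$, this gives $C(R) \leq O(m\cdot C^{*}(R))$, so $\N(R) \leq \log C^{*}(R) + \log m + O(1)$, equivalently $\log C^{*}(R) \geq \N(R) - \log m - O(1)$. Combining with the previous paragraph yields
\[
\N^{asymp}(R) \;\geq\; \log C^{*}(R) \;\geq\; \N(R) - \log m - O(1),
\]
which is the claimed bound.

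The main obstacle is the multiplicativity of $C^{*}$; the key observation that rectangles in a product relation decompose as products of rectangles in the factors is what makes LP duality match the product structure exactly. The rounding step is routine set-cover, and the final reconstitution is arithmetic. One subtle point to watch is that in the LP the variables range over \emph{all} $R$-rectangles (not just those appearing in some fixed optimal protocol), so the dual packing is genuinely constrained on every rectangle, which is what allows the tensor packing to remain feasible for $R^n$.
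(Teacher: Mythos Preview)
The paper does not prove this theorem; it is quoted from \cite{FracCov} and \cite[Corollary~4.9]{Comm} without argument. Your proof is correct and is exactly the approach of those references: identify $\N(R)$ with $\log C(R)$, pass to the LP relaxation $C^{*}(R)$, establish $C^{*}(R^n)=C^{*}(R)^n$ via LP duality, and then close the gap between $C^{*}$ and $C$ by the standard Lov\'asz greedy/randomised rounding for set cover, which costs a factor $O(\log|R|)=O(m)$.

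One phrasing to tighten: an arbitrary $R^n$-rectangle $A\times B$ need not \emph{equal} a product of $R$-rectangles, but it is always \emph{contained} in one, namely $\prod_i \pi_i(A)\times\prod_i \pi_i(B)$, and this containment is all the dual-feasibility check needs. With that adjustment the duality step goes through verbatim.
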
	

\begin{cor}
	If $S^\mathbb{Z}$ is sofic, then $\N(R_{n,n}(S)) \leq \log |\Sigma|
	\log n + O(1)$.
\end{cor}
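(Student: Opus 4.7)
The plan is to chain together the three tools the section has just assembled. First, since $S^\mathbb{Z}$ is sofic, Proposition~\ref{prop:shift} furnishes a constant $c$ (depending only on the Wang tile set witnessing soficity) such that $\N(R_{n,n}(S^\mathbb{Z})) \leq c$ for every $n$. Next, I would spell out the identification $R_{n,n}(S^\mathbb{Z}) = R_{n,n}(S)^\mathbb{Z}$ already alluded to in the text: the rows of any element of $S^\mathbb{Z}$ are by definition independent elements of the one-dimensional shift $S$, so a width-$n$ vertical strip in $S^\mathbb{Z}$ is nothing other than a bi-infinite sequence indexed by $\Z$ whose letter at position $i$ is the length-$n$ row factor in $L_n(S)$; horizontal side-by-side gluing of two such strips produces a valid width-$2n$ strip exactly when every row's length-$2n$ concatenation lies in $L_{2n}(S)$, which is precisely the pointwise condition $(x_i,y_i) \in R_{n,n}(S)$.

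Invoking Theorem~\ref{t:asymp} on this identification converts the previous constant bound into $\N^{asymp}(R_{n,n}(S)) \leq c$. I would then apply Theorem~\ref{thm:directsum} to the finite relation $R_{n,n}(S) \subseteq L_n(S) \times L_n(S) \subseteq \Sigma^n \times \Sigma^n$: encoding $\Sigma^n$ as $\{0,1\}^m$ with $m = n\lceil\log|\Sigma|\rceil$ rewrites the direct-sum inequality as $\N(R_{n,n}(S)) \leq \N^{asymp}(R_{n,n}(S)) + \log m + O(1)$, and plugging in the previous bound yields $\N(R_{n,n}(S)) \leq c + \log(n\log|\Sigma|) + O(1)$, which is dominated by $\log|\Sigma|\log n + O(1)$ as claimed.

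The only step requiring any real care is the combinatorial identification $R_{n,n}(S^\mathbb{Z}) = R_{n,n}(S)^\mathbb{Z}$, since one must keep two parallel readings of a height-$\Z$, width-$n$ picture (``$n$ independent bi-infinite rows from $S$'' versus ``a bi-infinite sequence of length-$n$ factors of $S$'') straight in order to recognise that the subshifts on both sides really coincide, both as sets and as $\Z$-actions. Every subsequent step is a direct appeal to a theorem stated earlier in the paper, so no further combinatorics is needed beyond this unpacking.
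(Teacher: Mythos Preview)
Your proposal is correct and follows precisely the route the paper sketches in the discussion immediately preceding the corollary: Proposition~\ref{prop:shift} for the $O(1)$ bound on $\N(R_{n,n}(S^{\mathbb Z}))$, the identification $R_{n,n}(S^{\mathbb Z}) = R_{n,n}(S)^{\mathbb Z}$, Theorem~\ref{t:asymp} to pass to $\N^{asymp}$, and then Theorem~\ref{thm:directsum} to recover $\N(R_{n,n}(S))$ at the cost of a $\log m$ term. Your care with the row/column orientation in verifying the identification is well placed, since the paper's stated conventions for $L_n$ and for $S^{\mathbb Z}$ are not perfectly aligned; once the orientation is fixed so that the identity holds, the rest is exactly the intended chain of citations.
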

We may replace in the theorem the right-hand term by $\log |L_n(S)| +
O(1)$.
(Alice and Bob may reject inputs $(x,y)$ that are not in $L_n(S)
\times L_n(S)$ as they cannot correspond to a valid word $xy \in
L_{2n}(S)$).
This corollary may be seen as a
reformulation of  \cite[Proposition 4.3]{Pavlov} in a different
vocabulary which makes the theorem more natural.

Note also that if $S$ is sofic then $\N(R_{n,n}) = O(1)$,
which means that possible counterexamples to the conjecture entail sets
$S$ for which the communication complexity is low but nonconstant. 

A specific potential counterexample is mentioned in \cite{Pavlov}. Let
$S$ be the subshift whose set of forbidden patterns is ${\cal F} = \{
  ca^kdb^kc\}$. That is, every time the pattern $ca^nb^mc$ appears in
some word of $S$, we must have $n \neq m$.

For this particular example, we have $\N(R_{n,n}(S)) = \log |L_n(S)| + O(1)$.
To simplify the exposition, suppose that Alice's word (of size $n$) ends 
with $wca^kd$, Bob's word begins with $b^mc$ and they want
to know whether $k \neq m$.
We will give a (well-known) protocol of complexity $1 + \log \log n \simeq \log |L_n(S)|$.
The following nondeterministic protocol can be used: 
Alice chooses nondeterministically an integer $i$ between
$1$ and $\log n$ and sends $i$ and the $i$-th bit of $k$ to Bob. Then Bob
tests whether the $i$-th bit of $m$ is different from what Alice sent.

Theorem \ref{thm:directsum} gives only a lower bound and does not
preclude that ${\N^{asymp}(R_{n,n}(S)) \neq O(1)}$. 
However it is well-known in this case {FracCov} that
we have ${\N^{asymp}(R_{n,n}(S)) = O(1)}$, so that 
Proposition~\ref{prop:shift} is not sufficient to treat this particular
counterexample.

\section{Open Problems}
The main open question is related to the definition of Communication
Complexity as an infimum. We do not know whether a protocol of
optimal communication complexity ($\N(S)$) is always possible.

Section~\ref{sec:amor} suggests a link between the infinite
communication complexity of a subshift $S$ and the asymptotic limit of
communication complexity of the analog problem for finite words.
We have specific counterexamples showing the two quantities are not
always equal, but we conjecture that the finite version is an upper bound
for the infinite version.

In terms of tilings, proving the extension conjecture, in particular using
infinite communication complexity, remains open.

\appendix

\clearpage

\end{document}
\section{Multi-valued maps}

\begin{theorem}
Let $S \subseteq X \times Y$ compact so that for all $x$, there exists  $y$ so
that $(x,y) \in S$.

Then for all $x$, for all $\epsilon$ there exists $\delta$ so that if
$d(x,x') < \delta$ then there exists $y,y'$ such that
\begin{itemize}
	\item $(x,y) \in S$;
	\item $(x',y') \in S$;
	\item $d(y,y') < \epsilon$.
\end{itemize}	
\end{theorem}
Proof:
Let $Y$ be the set of images of $x$

Consider $T = \{ x' | \exists (z, z'), (x',z) \in S \wedge z' \in Y
 \wedge d(z,z') < \epsilon\}$.

We will prove that $T$ contains an open neighborhood of $x$.

Otherwise let $x_i$ be a sequence converging to $x$ such that $x_i \not\in T$.

Let $z_i$ so that $(x_i,z_i) \in S$.

By definition, for all $z' \in Y$, $d(z_i,z') \geq \epsilon$.

Let $z$ be a limit point $z_i$.
By closure of $S$, $(x,z) \in S$.
So $z \in Y$ by definition of $Y$.
But this contradicts that $\forall z' \in Y, d(z, z') \geq \epsilon$.

\end{document}